\documentclass[conference]{IEEEtran}
\IEEEoverridecommandlockouts
% The preceding line is only needed to identify funding in the first footnote. If that is unneeded, please comment it out.
\usepackage{bm}
\usepackage{cite}
\usepackage{amsmath,amssymb,amsfonts,graphicx,amsthm,mathtools,nicefrac}
\usepackage{algorithmic}
\usepackage{graphicx}
\usepackage{subfigure}
\usepackage{textcomp}
\usepackage{xcolor}
\def\BibTeX{{\rm B\kern-.05em{\sc i\kern-.025em b}\kern-.08em
    T\kern-.1667em\lower.7ex\hbox{E}\kern-.125emX}}

\allowdisplaybreaks
\newcommand\numberthis{\addtocounter{equation}{1}\tag{\theequation}}
%%%%%%

\newtheorem{theorem}{Theorem}[section]
\newtheorem{assumption}{Assumption}[section]
\newtheorem{remark}{Remark}[section]

\numberwithin{equation}{section}

\newtheorem{lemma}[theorem]{Lemma}
\newtheorem{corollary}[theorem]{Corollary}

%%%%%%

\DeclareMathOperator{\diag}{\mathrm{diag}}

%%%%%% 
\def\lsb{\left[}
\def\rsb{\right]}
\def\lab{\left|}
\def\rab{\right|} 
\def\la {\left\langle}
\def\ra {\right\rangle} 
\def \lb{\left(}
\def \rb{\right)}
%%%%%%
% NEW MJW VERSION

%\newcommand{\matsnorm}[2]{|\!|\!| #1 | \! | \!|_{{#2}}}

\newcommand{\matsnorm}[2]{\left\| #1\right\|_{{#2}}}

\newcommand{\nucnorm}[1]{\ensuremath{\matsnorm{#1}{\footnotesize{\mbox{$\ast$}}}}}
\newcommand{\fronorm}[1]{\ensuremath{\matsnorm{#1}{\footnotesize{\mathsf{F}}}}}
\newcommand{\opnorm}[1]{\ensuremath{\matsnorm{#1}{}}}

\newcommand{\twonorm}[1]{\ensuremath{\matsnorm{#1}{\footnotesize{2}}}}

\newcommand{\bfm}[1]{\bm{#1}}

%\newcommand{\E}[2][]{\mathbb{E}_{#1} \left[ #2 \rule{0mm}{3mm}\right]}

% Define Boldfaces
\def\va{\bfm a}   \def\mA{\bfm A}  
\def\vb{\bfm b}   \def\mB{\bfm B}  
     \def\C{\mathbb{C}}
\def\vd{\bfm d}   \def\mD{\bfm D}  
\def\ve{\bfm e}   \def\mE{\bfm E}  
    
\def\vg{\bfm g}     
\def\vh{\bfm h}     
   \def\mI{\bfm I}

   \def\mM{\bfm M}

     \def\R{\mathbb{R}}
\def\vs{\bfm s}   \def\mS{\bfm S}  
     
\def\vu{\bfm u}   \def\mU{\bfm U}  
\def\vv{\bfm v}   \def\mV{\bfm V}  
\def\vw{\bfm w}     
\def\vx{\bfm x}   \def\mX{\bfm X}  
\def\vy{\bfm y}   \def\mY{\bfm Y}  
   \def\mZ{\bfm Z}

% Define mathcal font

\def\calA{{\cal  A}}

\def\calD{{\cal  D}}

\def\calG{{\cal  G}} 
\def\calH{{\cal  H}} 
\def\calI{{\cal  I}}

\def\calP{{\cal  P}}

\def\calX{{\cal  X}}

\newcommand{\bfsym}[1]{\bm{#1}}

             \def\bSigma{\bfsym \Sigma}
\def\blambda {\bfsym {\lambda}}

% May add more in future.

%%%%%%%%%%%%%%%%%%%% hat in greek  %%%%%%%%%%%%%%%%%%%%%%%%%%%%

\def \calGT {\calG^\ast}

\def \tran {\mathsf{T}}
\def \tranH{\mathsf{H}}

\def \bzero{\bm 0}

\usepackage{bbm}

\begin{document}

\title{Simultaneous Blind Demixing and Super-resolution via Vectorized Hankel Lift\thanks{Corresponding authors: Jinchi Chen and Li Yu.}
\thanks{Jinchi Chen was partially supported by National Science Foundation of China under Grant No. 12001108.}
}

\author{
    \IEEEauthorblockN{Haifeng Wang\IEEEauthorrefmark{1},
    Jinchi Chen\IEEEauthorrefmark{2},
    Hulei Fan\IEEEauthorrefmark{1},
    Yuxiang Zhao\IEEEauthorrefmark{3},
    Li Yu\IEEEauthorrefmark{3}}
    \IEEEauthorblockA{\IEEEauthorrefmark{1}China Mobile (Zhejiang) Research \& Innovation Institute, Hangzhou, China
    \\\{wanghaifeng40, fanhulei\}@zj.chinamobile.com}
    \IEEEauthorblockA{\IEEEauthorrefmark{2}School of Mathematics, East China University of Science and Technology, Shanghai, China
    \\\{jcchen.phys\}@gmail.com}
    \IEEEauthorblockA{\IEEEauthorrefmark{3}China Mobile Research Institute, Beijing, China
    \\\{zhaoyuxiang, yuliyf\}@chinamobile.com}
}

\maketitle

\begin{abstract}
In this work, we investigate the problem of simultaneous blind demixing and super-resolution. Leveraging the subspace assumption regarding unknown point spread functions, this   problem can be reformulated as a low-rank matrix demixing problem. We propose a convex recovery approach that utilizes the low-rank structure of each vectorized Hankel matrix associated with the target matrix. Our analysis reveals that for achieving exact recovery, the number of samples needs to satisfy the condition $n\gtrsim Ksr \log (sn)$. Empirical evaluations demonstrate the recovery capabilities and the computational efficiency of the convex method.
\end{abstract}

\begin{IEEEkeywords}
Blind dimixing, blind super-resolution, vectorized Hankel lift.
\end{IEEEkeywords}

\section{Introduction}
% 为什么需要雷达通信一体化，背景，频谱资源共享等
% 雷达通信一体化的含义，包括3种形式，本文聚焦在哪个形式上
% 问题可以转化为demixing问题
% related work
% our contributions
The simultaneous blind demixing and super-resolution of point sources refers to the problem of concurrently achieving blind super-resolution \cite{chi2016guaranteed,yang2016super,li2019atomic,chen2022vectorized} for $K$ point source signals within their superimposed mixture. This problem arises in a range of applications, including but not limited to joint radar-communications \cite{vargas2023dual}, multi-user multi-channel estimation \cite{luo2006low}.

It is well-established that blind super-resolution is intrinsically ill-posed without additional assumptions \cite{chi2016guaranteed,yang2016super,li2019atomic,chen2022vectorized}. The problem under consideration can be viewed as an extension of blind super-resolution, which exacerbates its complexity. Consequently, we  introduce a subspace assumption and reformulate the problem of simultaneous blind demixing and super-resolution as a structured low-rank matrix demixing problem.

% matrix demixing 

Recent research efforts, as demonstrated in \cite{vargas2023dual, monsalve2022beurling, jacome2023multi, razavikia2023off}, have harnessed the inherent structure of data matrices to develop various convex relaxation techniques for addressing this problem. Specifically, in \cite{vargas2023dual}, inspired by joint radar and communication systems, 
the authors proposed an atomic norm minimization (ANM) approach for simultaneous blind demixing and super-resolution with $K=2$. Furthermore, in \cite{monsalve2022beurling}, a nuclear norm minimization method was designed for the same problem. Additionally, \cite{razavikia2023off} extended ANM to address this problem for arbitrary $K$, but the theoretical analysis is  lacking.

In this paper, our focus centers on addressing the simultaneous blind demixing and super-resolution problem for arbitrary values of $K$. We utilize the vectorized Hankel lift technique as introduced in \cite{chen2022vectorized} to leverage the low-dimensional structures within the target matrices. This approach enables a convex framework for reconstruction, and the exact recovery guarantees based on standard assumptions are established.

\section{Problem formulation and proposed method}
\subsection{Problem Formulation}
%Suppose there are $K$ single-antenna users and each one send a point source signal to a single base station.

Consider a set of $K$ point source signals denoted as $\{x_k(t)\}_{k = 1}^K$, where the $k$-th signal can be expressed in the following form:
\begin{align*}
	x_k(t) = \sum_{\ell = 1}^{r_k} d_{k,\ell} \delta(t-\tau_{k,\ell}),
\end{align*}
where $\delta(\cdot)$ represents the Dirac function, $r_k$ indicates the number of spikes, and $\{\tau_{k,\ell}\}_{\ell = 1}^{r_k}$ and $\{d_{k,\ell}\}_{\ell =1}^{r_k}$ represent the locations and amplitudes of the point source signals, respectively.

Let $y(t)$ be the summation of point source signals convolved with unknown point spread functions, given by
\begin{align}
\label{eq: receive signal}
	y(t) = \sum_{k = 1}^K x_k(t)\ast g_k(t) = \sum_{k=1}^{K}   \sum_{\ell = 1}^{r_k} d_{k,\ell} g_{k}(t-\tau_{k,\ell}).
\end{align}
By taking the Fourier transform of~\eqref{eq: receive signal} and subsequently sampling, we obtain that for $j=0,\cdots, n-1,$
\begin{align}\label{eq: observation}
	 \vy[j] = \sum_{k=1}^{K} \sum_{\ell = 1}^{r_k} d_{k,\ell} e^{-\imath 2\pi (j-1)\tau_{k,\ell}} \hat{g}_{k}[j],
\end{align}
where $\vg_{k} := \begin{bmatrix}
	\hat{g}_k[0] &\cdots & \hat{g}_k[n-1] 
\end{bmatrix}^\tran\in\C^n$
are unknown. The goal of the simultaneous blind demixing and super-resolution problem is to jointly recover both $\{d_{k,\ell}, \tau_{k,\ell}\}$ and $\{\vg_k\}$ from~\eqref{eq: observation}.

As previously mentioned, simultaneous blind demixing and super-resolution is an ill-posed problem without any additional assumptions. In alignment with prior research \cite{vargas2023dual,monsalve2022beurling,jacome2023multi,razavikia2023off}, we adopt a similar approach and make the assumption that each point spread function $\vg_k$ lies within a known subspace defined by $\mB_k\in\C^{n\times s_k}$, such that:
\begin{align*}
    \vg_k = \mB_k\vh_k, \quad k=1,\cdots, K,
\end{align*}
where $\vh_k\in\C^{s_k}$ denotes an unknown coefficient vector. 
%$\mB_k^\tranH = \begin{bmatrix}
%	\vb_{k,1} &\cdots &\vb_{k,n}
%\end{bmatrix}\in\C^{s_k\times n}$. 

Under the subspace assumption and by leveraging a lifting approach, we can express the measurements given in~\eqref{eq: observation} as a superposition of linear observations involving $\left\{\mX_k^\natural := \sum_{\ell = 1}^{r_k} d_{k,\ell} \vh_k\va_{\tau_{k,\ell}}^\tran \right\}_{k=1}^K$:
\begin{align*}
    \vy[j] = \sum_{k=1}^{K} \la \ve_j\vb^\tranH_{k,j}, \mX_k^\natural \ra,\quad j=0,\cdots, n-1,
\end{align*}
where $\ve_j\in\R^n$ represents the $j$-th standard basis vector in $\R^n$, $\vb_{k,j}$ denotes the $j$-th row of $\mB_k$, and $\va_{\tau}\in\C^n$ stands as the steering vector, defined as
\begin{align*}
\begin{bmatrix}
1 & e^{-\imath2\pi \cdot 1\cdot \tau} &\cdots & e^{-\imath2\pi \cdot (n-1)\cdot \tau} 
\end{bmatrix}^\tran\in\C^{n}.
\end{align*}
Without loss of generality, we assume that $s_1 = \cdots = s_K = s$ and $r_1 = \cdots= r_K=r$. Additionally, consider $\calA_k$ as a linear operator $\calA_k:\C^{s\times n}\rightarrow \C^n$, defined as
\begin{align*}
	\left(\calA_k\lb \mX_k^\natural\rb \right)[j] = \la  \ve_j\vb^\tranH_{k,j}, \mX_k^\natural\ra, \quad j=0,\cdots, n-1.
\end{align*} 
Consequently, the measurement model can be succinctly expressed as
\begin{align}\label{eq: obs}
	\vy = \sum_{k=1}^{K} \calA_k\lb\mX_k^\natural\rb.
\end{align}
Hence, the problem of simultaneous blind demixing and super-resolution can be cast as the task of demixing a sequence of matrices $\{\mX_k^\natural\}$ from the superimposed linear measurements of these matrices. 
Upon successfully recovering the data matrices ${\mX_k^\natural}$, it becomes feasible to extract the frequencies ${\tau_{k,\ell}}$ through spatial smoothing MUSIC \cite{chen2022vectorized,evans1981high, evans1982application,yang2019source}.
%and estimate the coefficient vectors ${\vh_k}$ by solving an over-determined {\color{red}underdetermined?} linear system. Consequently, the primary focus of this work centers on solving the matrix demixing problem.

\subsection{Proposed Method}
Let $\calH$ denote the vectorized Hankel lifting operator, which transforms a matrix $\mX\in\C^{s\times n}$ into a matrix of dimensions $sn_1\times n_2$, defined as follows:
\begin{align*}
    \calH(\mX) = \begin{bmatrix}
        \vx_0 &\vx_1&\cdots&\vx_{n_2-1}\\
        \vx_1 &\vx_2&\cdots&\vx_{n_2}\\
        \vdots &\vdots &\ddots &\vdots\\
        \vx_{n_1-1}&\vx_{n-1}&\cdots&\vx_{n-1}
    \end{bmatrix}\in\C^{sn_1\times n_2},
\end{align*}
where $\vx_i\in\C^s$ represents the $i$-th column of $\mX$, and $n_1+n_2 = n+1$. It has been demonstrated that the rank of $\calH(\mX_k^\natural)$ is at most $r$ \cite{chen2022vectorized}. Consequently, we adopt the nuclear norm minimization to promote the low rank structure of $\calH(\mX_k^\natural)$ and consider the following convex approach to recover $\{\mX_k^\natural\}$:
\begin{align}\label{eq: oripro}
    \min_{\{\mX_k\}_{k = 1}^K} \sum_{k = 1}^K \nucnorm{\calH(\mX_k)},\quad \mbox{s.t.} \sum_{k=1}^{K} \calA_k(\mX_k) = \vy,
\end{align}
which is denoted as the Multiple Vectorized Hankel Lift (MVHL). Since this convex optimization problem can be effectively tackled using various existing software packages, our focus is narrowed down to evaluating the theoretical performance of \eqref{eq: oripro} and investigating when its solution aligns with $\{\mX_k^\natural\}$.

%Since such convex optimization problem can be solved by various existing software packages, we restrict our attention on the theoretical performance of \eqref{eq: oripro} and study when the solution to \eqref{eq: oripro} concises with $\{\mX_k^\natural\}$.
%In this paper, problem~\eqref{eq: oripro} is referred as Multiple Vectorized Hankel Lift (VHL).
\section{Main results}
Before presenting our main results, we will introduce some standard assumptions.
%The main result of this paper is that if we assume  the columns of the subspace matrix $\mB_k$ are i.i.d. drawn from a distribution $F$ that satisfies the isotropy property and the incoherence property, together with the incoherence property for the target matrices $\left\{\mX_k^\natural\right\}_{k = 1}^K$, the matrices $\left\{\mZ_k^\natural = \calH\lb \mX_k^\natural\rb\right\}_{k = 1}^K$ are the  unique optimal solution to the problem \eqref{eq: voptp}.
\begin{assumption}
\label{assumption 1}
    The matrices $\{\mB_k\}$ are independent, and the column vectors $\{\vb_{k,\ell}\}_{\ell = 1}^n$ of the subspace matrix $\mB_k$ are sampled independently and identically from a population $F$ which obeys the following properties:
    \begin{align}
        \label{isotropy}
        \mathbb{E}\lsb\vb\vb^\ast\rsb &= \mI_s \mbox{ and }
        \max_{0\leq k\leq s-1}|\vb[j]|^2\leq \mu_0.
        %\label{incoherence}
    \end{align}
    %\begin{itemize}
    %    \item \textbf{Isotropy property}.  A distribution $F$ is said to satisfy the isotropy property if for $\vb\sim F$,
    %    \begin{align*}
    %        \mathbb{E}\lsb\vb\vb^\ast\rsb = \mI_s.
    %    \end{align*}
    %     \item \textbf{Incoherence property}.  A distribution $F$ is said to satisfy the incoherence property with parameter $\mu_0$ if for $\vb\sim F$, the following inequality holds
    %    \begin{align*}
    %        \max_{0\leq k\leq s-1}|\vb[j]|^2\leq \mu_0.
    %    \end{align*} 
    %\end{itemize}
\end{assumption}
\begin{remark}
Assumption \ref{assumption 1} is widely used in blind super-resolution \cite{chi2016guaranteed,yang2016super,li2019atomic,chen2022vectorized}, dual-blind deconvolution \cite{vargas2023dual,monsalve2022beurling,jacome2023multi,razavikia2023off}, and can be satisfied by Rademacher random vector or when $\vb$ is uniformly sampled from a Discrete Fourier Transform matrix. 
\end{remark}

\begin{assumption}
\label{assumption 2}
	Let $\calH(\mX^\natural) = \mU\bSigma\mV^\tranH$ be the singular value decomposition of $\calH(\mX^\natural) $, where $\mU\in\C^{sn_1\times r}, \bSigma\in\R^{r\times r}$ and $\mV\in\C^{n_2\times r}$. Denote $\mU^\tranH = \begin{bmatrix}
		\mU_0^\tranH &\cdots &\mU_{n_1-1}^\tranH
	\end{bmatrix}^\tranH$, where $\mU_\ell= \mU[\ell s + 1 : (\ell+1) s, :]$ is the $\ell$-th block of $\mU$ for $\ell=0,\cdots, n_1-1$. The matrix $\mX^\natural$ is $\mu_1$-incoherence if $\mU$ and $\mV$  obey that
	\begin{align*}
		\max_{0\leq \ell\leq n_1-1} \fronorm{\mU_\ell}^2 \leq \frac{\mu_1 r}{n} \mbox{ and }\max_{0\leq j \leq n_2-1} \twonorm{\ve_j^\tran\mV}^2 \leq \frac{\mu_1 r}{n}
	\end{align*}
	for some positive constant $\mu_1 $.
\end{assumption}
\begin{remark}
Assumption \ref{assumption 2} is commonly adopted in blind super-resolution, and is satisfied when the minimum wrap-up distance between the locations of point sources is greater than about $2/n$.
\end{remark}

Now we state our main results, whose proofs are deferred to Section~\ref{sec: proof architecture}.
\begin{theorem}\label{thm: main theorem}
    Under Assumptions~\ref{assumption 1} and \ref{assumption 2}, the matrices $\{ \mX_k^\natural \}_{k = 1}^K$ are the  unique optimal solution to the problem \eqref{eq: oripro} with probability at least $1-c_0(sn)^{-c_1}$, provided that $n\gtrsim K\mu_0\mu_1sr\log(sn)$, where $c_0,c_1$ are absolute constants.
\end{theorem}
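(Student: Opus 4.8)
The plan is to follow the now-standard dual-certificate approach for nuclear-norm minimization and low-rank matrix demixing, adapted to the multi-block vectorized Hankel setting of \cite{chen2022vectorized}. Write $T_k$ for the tangent space of the rank-$r$ vectorized Hankel variety at $\calH(\mX_k^\natural)$, i.e.\ $T_k = \{\mU_k\mZ^\tranH + \mW\mV_k^\tranH\}$, and let $\calP_{T_k}$ denote the projection onto $T_k$ and $\calP_{T_k^\perp}$ its complement. Standard convex-analysis arguments reduce exact and unique recovery of $\{\mX_k^\natural\}$ to exhibiting, for each $k$, a matrix $\mW_k$ lying in the range of $\calA_k^\ast$ (so that $\sum_k \calA_k^\ast(\blambda)$ is a valid dual variable with a common $\blambda$), together with the usual injectivity condition $\ker(\calA) \cap (\bigoplus_k T_k) = \{0\}$ restricted to the appropriate subspace, plus the approximate dual-certificate conditions $\twonorm{\calP_{T_k}\calH^\ast\calA_k^\ast(\blambda) - \mU_k\mV_k^\tranH}$ small and $\opnorm{\calP_{T_k^\perp}\calH^\ast\calA_k^\ast(\blambda)} < 1$. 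Because the $K$ blocks share a single linear constraint, the certificate must be built so that one vector $\blambda\in\C^n$ simultaneously certifies all $K$ matrices; this coupling is the feature that forces the extra factor $K$ in the sample complexity.

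First I would establish the local isometry / restricted injectivity: under Assumption~\ref{assumption 1} the rescaled operators $\calA_k\calH^\ast$ act near-isometrically on $T_k$, and moreover the cross terms $\calP_{T_j}\calH^\ast\calA_j^\ast\calA_k\calH^\ast\calP_{T_k}$ for $j\ne k$ are small in operator norm; both follow from matrix Bernstein applied to the sum over the $n$ measurement indices, using the isotropy $\E[\vb\vb^\ast]=\mI_s$ and the boundedness $\mu_0$, together with the incoherence $\mu_1$ of Assumption~\ref{assumption 2} to control the per-index variance and operator-norm bounds. The block structure means one assembles a $K\times K$ operator whose diagonal blocks are $\approx \mI$ and whose off-diagonal blocks are $O(1/K)$ in norm once $n\gtrsim K\mu_0\mu_1 sr\log(sn)$, so the whole thing is invertible on $\bigoplus_k T_k$.

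Second, I would construct the dual certificate by the golfing scheme: partition the $n$ Fourier samples (or, more precisely, reuse independence by taking a fresh batch of rows of the $\mB_k$'s in each iteration, exactly as in \cite{chen2022vectorized}) into $\ell_0 \asymp \log(sn)$ disjoint groups, and iteratively define $\blambda$ so that the residual $\mU_k\mV_k^\tranH - \calP_{T_k}\calH^\ast\calA_k^\ast(\blambda)$ contracts geometrically in Frobenius norm across all $k$ simultaneously. Controlling the $T_k^\perp$ component requires bounding $\opnorm{\calP_{T_k^\perp}\calH^\ast\calA_k^\ast(\cdot)}$ applied to each partial residual; the vectorized Hankel structure (number of nonzeros per anti-diagonal, the $\sqrt{s}$-type factors) enters here and is handled by the same matrix Bernstein / $\opnorm{\cdot}$ bounds used in the single-signal case, but now every probabilistic event must hold uniformly over the $K$ blocks, costing a $\log K$ (absorbed into $\log(sn)$) and a union bound that contributes the extra $K$ factor in the variance proxy.

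The main obstacle is precisely this \emph{simultaneity}: the single shared dual vector $\blambda$ must certify all $K$ tangent spaces at once, so one cannot run $K$ independent golfing arguments. I expect the cleanest route is to work in the product Hilbert space $\bigoplus_{k=1}^K T_k$, define a stacked residual, and show the stacked golfing iteration contracts — which needs the off-diagonal cross-energy bounds from the first step to be uniformly $o(1/K)$, hence the $K\mu_0\mu_1 sr\log(sn)$ threshold. Verifying the cross-term operator-norm bounds with the correct dependence on $K$, $s$, $r$, and the incoherence parameters, and checking that the golfing contraction constant stays bounded away from $1$ after accounting for all $K$ blocks, is the technical heart of the proof; the remaining pieces (reduction to the certificate, the final union bound giving probability $1 - c_0(sn)^{-c_1}$) are routine.
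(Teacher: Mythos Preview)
Your proposal is correct and follows essentially the same route as the paper: reformulate via $\calG=\calH\calD^{-1}$, establish the diagonal near-isometry $\opnorm{\calP_{T_k}\calG\calA_k^\ast\calA_k\calG^\ast\calP_{T_k}-\calP_{T_k}\calG\calG^\ast\calP_{T_k}}\le\tfrac14$ and the cross-term bound $\opnorm{\calP_{T_i}\calG\calA_i^\ast\calA_j\calG^\ast\calP_{T_j}}\le\tfrac{1}{8K}$ by matrix Bernstein, then run a single golfing scheme with a shared $\blambda^{t}$ and show the stacked residual $\max_k\fronorm{\mE_{k,t}}$ contracts by $\tfrac12$ per round, exactly the mechanism you identify. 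The only point you leave implicit is the extra Hankel-feasibility constraint $(\calI-\calG\calG^\ast)(\mZ_k)=\bzero$, which the paper handles by adding the correction term $(\calI-\calG\calG^\ast)\calP_{T_k}(\mE_{k,t-1})$ to each golfing update; this is a bookkeeping detail already present in \cite{chen2022vectorized} and does not change the architecture you describe.
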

%\begin{remark}
%When $K=2$, the simultaneous blind demixing and super-resolution reduces to the problem of dual-blind deconvolution studied in [REF]. 
%\end{remark}

\section{Simulation Results}
In this section, we evaluate the empirical performance of MVHL for simultaneous blind demixing and super-resolution problem. To solve MVHL, we use the CVX optimization framework \cite{grant2014cvx}.

\subsection{Recovery Ability of MVHL}
We begin by investigate the recovery performance of MVHL in comparison to Atomic Norm Minimization (ANM) \cite{razavikia2023off} using the empirical phase transition framework. To generate the data matrices $\{\mX_k^\natural\}_{k = 1}^K$, we follow this procedure: the locations $\{\tau_{k,\ell}\}_{\ell}^L$ are uniformly sampled from the interval $[0,1)$, the amplitudes $\{d_{k,\ell}\}_{\ell}^L$ are set as $(1+10^{c_\ell})e^{-\imath\Psi_\ell}$, with $c_{\ell}$ uniformly sampled from $[0,1)$ and $\Psi_\ell$ uniformly sampled from $[0,2\pi)$; the coefficient $\vh_k$ is drawn from a standard Gaussian distribution with normalization. The subspace matrices $\{\mB_k\}_{k = 1}^K$ are independently sampled from the Discrete Fourier Transform (DFT) matrix. The experiments are conducted for $K = 2$, $n = 48$, and various values of $r$ and $s$. We execute each algorithm $20$ times for every combination of $r$ and $s$. A successful reconstruction is defined when the relative error satisfies $\sqrt{\frac{\sum_{k = 1}^K\fronorm{\mX_k^\natural-\mX_k}^2}{\sum_{k = 1}^K\fronorm{\mX_k^\natural}^2}}\leq 10^{-3}$.

Figure~\ref{fig: phase transition} (a) and Figure~\ref{fig: phase transition} (b) display the phase transition plots for MVHL and ANM, respectively. The figures reveal that MVHL exhibits a higher phase transition curve compared to ANM.

%when the locations are randomly generated without imposing the separation condition, and Figure~\ref{fig: phase transition} (c) and Figure~\ref{fig: phase transition} (d) plots the phase transitions of VHL and ANM when the separation condition: $\min_{i\neq j}|\tau_{k,i}-\tau_{k,j}|\geq \frac{1}{n}$ is imposed. It can be observed from the figure that the recovery ability of ANM degrades severely without separation condition, while VHL can still achieve good performance.

\begin{figure}[ht!]
	\centering
	\subfigure[]{
		\includegraphics[width=0.2\textwidth]{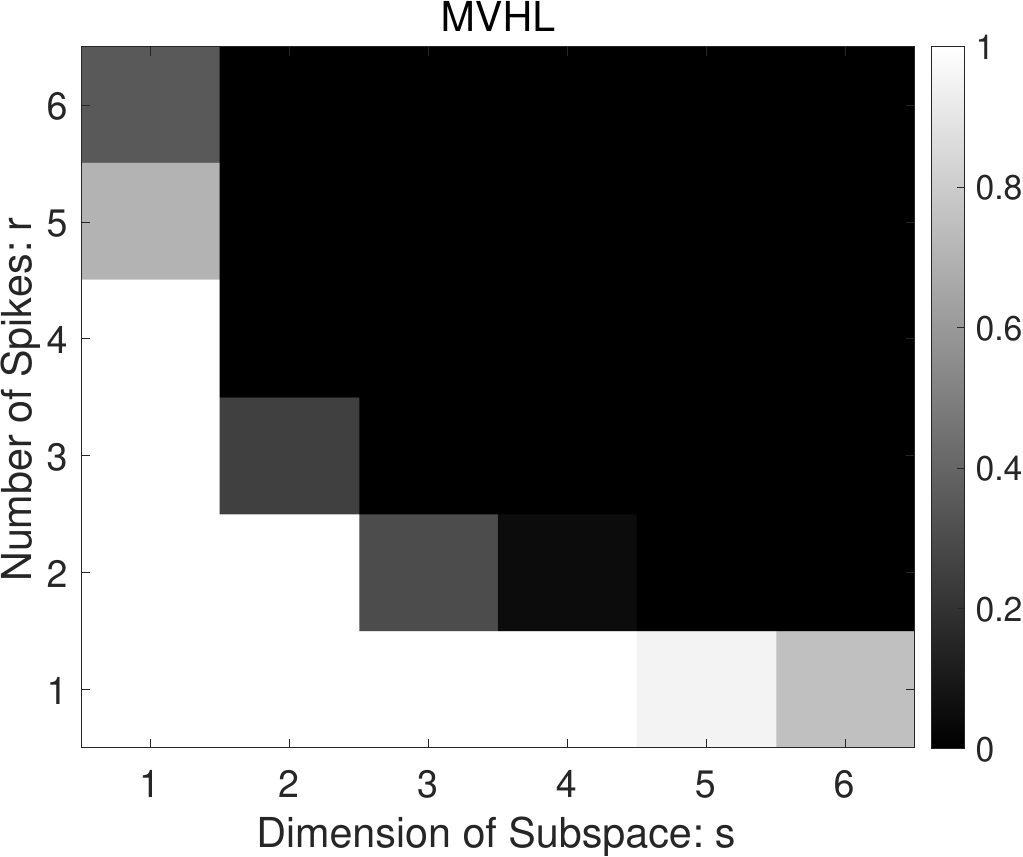}
	}
	\subfigure[]{
		\includegraphics[width=0.2\textwidth]{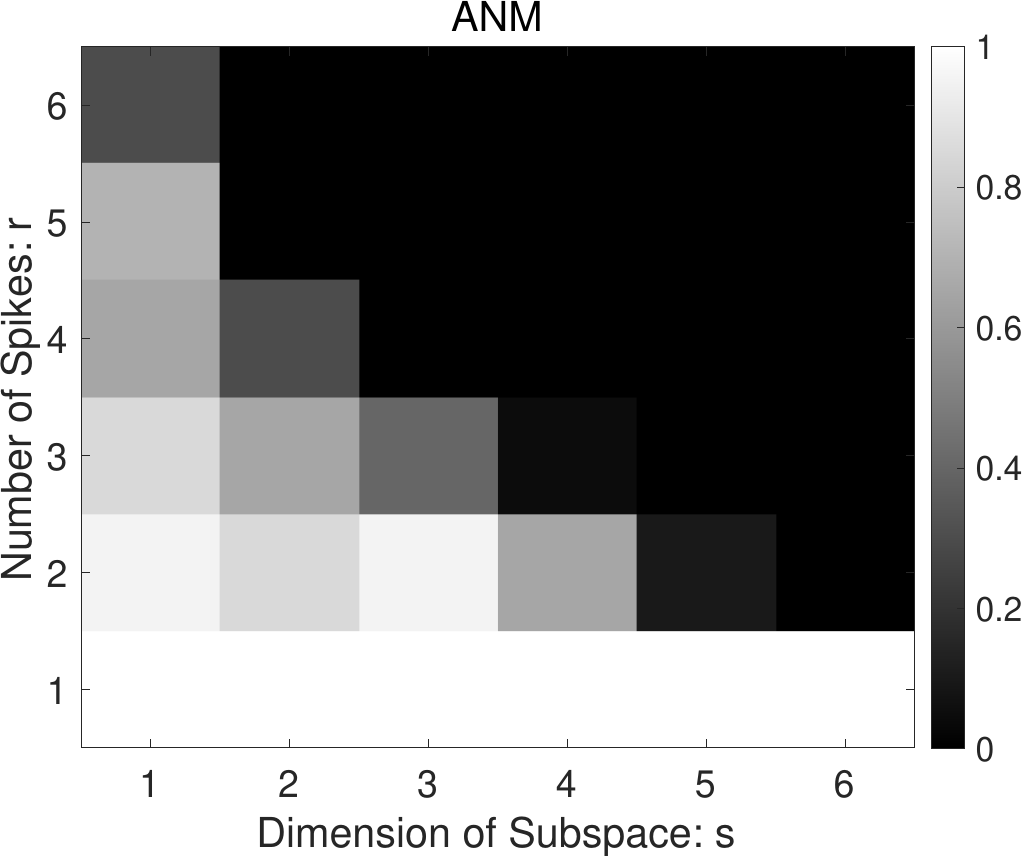}
	}
%	\subfigure[]{
%		\includegraphics[width=0.2\textwidth]{phase transition sep VHL.pdf}
		
%	}
% \subfigure[]{
%		\includegraphics[width=0.2\textwidth]{phase transition sep anm.pdf}
%		
%	}
	
	\caption{The phase transitions of MVHL and  ANM}
	\label{fig: phase transition}
\end{figure}

\subsection{Robustness of MVHL}
In this experiment, we aim to illustrate the robustness of MVHL in the presence of additive noise.  Consider the additive noise model where a true signal $\vy$ is contaminated by a noisy vector $\ve$, given by:
\begin{align*}
    \ve = \varepsilon \cdot \twonorm{\vy} \cdot  \vw/ \twonorm{\vw},
\end{align*}
where $\sigma$ represents the noise level, and $\vw$ follows a standard multivariate normal distribution. We perform tests with values of $n$ equal to either $48$ or $64$, while setting $s$ and $r$ to 2. The noise level $\varepsilon$ is varied from $10^{-3}$ to $1$, corresponding to a signal-to-noise ratio (SNR) ranging from 60 dB to 0 dB. For each combination of $(n,\varepsilon)$, we conduct 10 random instances of the problem. Figure~\ref{fig: noise} displays the average relative error as a function of SNR. Notably, the plot clearly demonstrates a linear relationship between the relative reconstruction error and the noise level. Furthermore, it is evident that the relative reconstruction error decreases as the number of measurements increases.
\begin{figure}[ht!]
\centering
\includegraphics[width=0.3\textwidth]{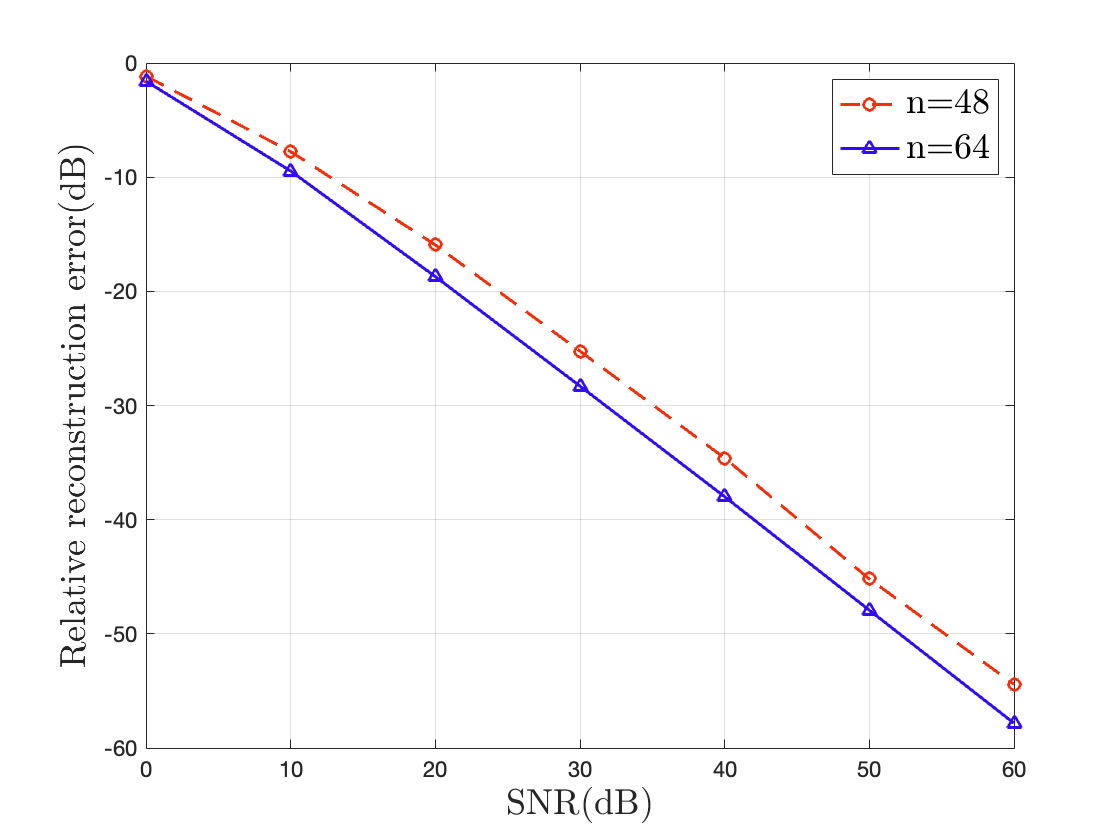}
\caption{Performance of MVHL under different noise levels}
\label{fig: noise}
\end{figure}

\subsection{Channel Parameter Estimation}
In this section, we evaluate the performance of MVHL in the context of joint radar-communication systems. The received signal is formulated as described in \cite{vargas2023dual}:
\begin{align*}
    \vy_p[n] &= \sum_{\ell = 1}^L [d_r]_{\ell} e^{-2\imath\pi(n[\tau_r]_{\ell}+p[\nu_r]_{\ell}) }\vs_{p}[n]\\
    &\quad +\sum_{j = 1}^J[d_c]_{j} e^{-2\imath\pi(n[\tau_c]_{j}+p[\nu_c]_{j}) }\vg_{p}[n],
\end{align*}
where $L$ and $J$ represent the number of targets and propagation paths, respectively. ${d_r}{\ell = 1}^L$ and ${d_c}{j = 1}^J$ denote the channel coefficients, ${\tau_r}{\ell = 1}^L$ and ${\tau_c}{j = 1}^J$ are the time delays, and ${\nu_r}{\ell = 1}^L$ and ${\nu_c}{j = 1}^J$ are the Doppler frequencies. $\vs_p[n]$ represents the discrete values of the Fourier transform of the $p$-th pulse $s(t-pT)$, while $\vg_{p}[n]$ denotes data symbols in the $p$-th message. Under the low-dimensional subspace assumption, and using arguments similar to those in \cite{mao2022blind}, the observed signal can be reformulated as:
\begin{align}\label{eq: ofdm}
    \vy[i] &=  \la \ve_i\vb_i^\tranH,\sum_{\ell = 1}^L [d_r]_\ell \vu[\va_{r}]_\ell^\tran \ra\notag\\
    &\quad + \la \ve_i\vd_i^\tranH,\sum_{j = 1}^J [d_r]_j \vv[\va_c]_j^\tran \ra,
\end{align}
for $j = 0,\cdots,NP-1$. Here  $\va_r$ and $\va_c$ are defined as
\begin{small}
\begin{align*}
    [\va_r]_\ell &= [e^{-2\imath\pi((0)[\tau_r]_{\ell}+(0)[\nu_r]_{\ell})},\cdots,e^{-2\imath\pi((N-1)[\tau_r]_{\ell}+(P-1)[\nu_r]_{\ell})}],\\
    [\va_c]_j &= [e^{-2\imath\pi((0)[\tau_c]_{j}+(0)[\nu_c]_{j})},\cdots,e^{-2\imath\pi((N-1)[\tau_c]_{j}+(P-1)[\nu_c]_{j})}].
\end{align*}
\end{small}
It's worth noting that~\eqref{eq: ofdm} is a special case of \eqref{eq: obs}.
%It is not hard to see that~\eqref{eq: ofdm} is a special case of \eqref{eq: obs}. 
In the simulation, we set $N = P = 10$ and $L = J = 2$. Each row of the subspace matrices is generated using the form $\begin{bmatrix}
       1 &e^{2\pi\imath f}&\cdots&e^{2\pi\imath(s-1) f}
   \end{bmatrix}$ with $f$ being uniformly sampled from $[0,1]$~\cite{chi2016guaranteed}. The target delays and Dopplers are drawn from the interval $[0,1]$ uniformly at random. The target matrices are recovered by solving problem~\eqref{eq: oripro}. Subsequently, the delays and Dopplers are estimated using 2D MUSIC.
   %then the delays and Dopplers are estimated by $2$D MUSIC and the amplitudes are retrieved by solving an linear system~\cite{chen2022vectorized}. 
   Figure~\ref{fig: channel} illustrates the true and estimated channel parameters for radar and communications. The results demonstrate that the proposed method, in conjunction with MUSIC, effectively recovers the delays and Dopplers.
\begin{figure}[ht!]
\centering
\includegraphics[width=0.3\textwidth]{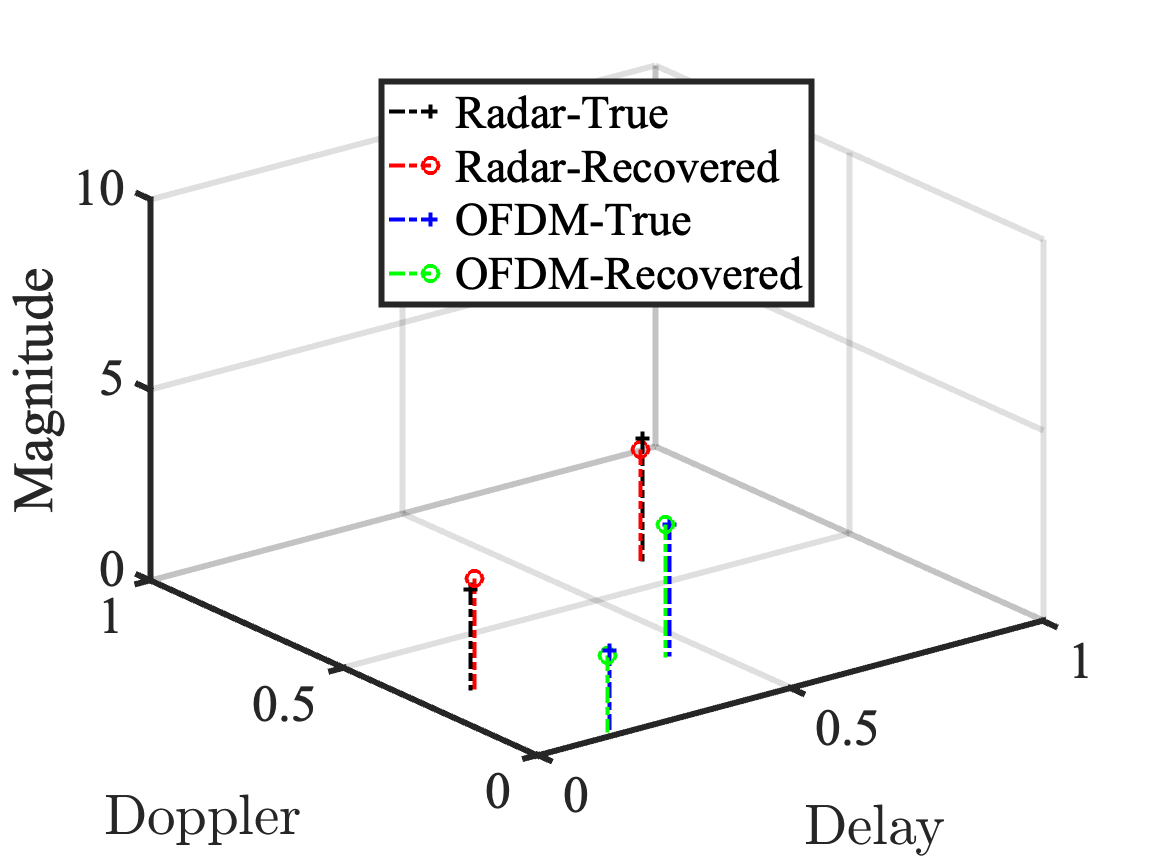}
\caption{Performance of MVHL for channel parameter estimation problem}
\label{fig: channel}
\end{figure}

\section{Proofs of Main Result}\label{sec: proof architecture}
The proof of Theorem \ref{thm: main theorem} relies on the dual certificate technique which has been widely used in analyzing low rank matrix recovery \cite{candes2009exact, gross2011recovering} and blind super-resolution \cite{chen2022vectorized}.  Let $\calH^\ast$ be the adjoint of $\calH$ which maps
$sn_1 \times n_2$ matrices to matrices of size $s \times n$. 
We define $\calD^2 = \calH^\ast\calH$. Given a matrix $\mX\in\C^{s\times n}$, the matrix $\calD^2\lb\mX\rb\in\C^{s\times n}$ is given by
\begin{align*}
    \calD^2\lb\mX\rb = \begin{bmatrix}
        w_0\vx_0 &\cdots&w_{n-1}\vx_{n-1}
    \end{bmatrix},
\end{align*}
where $w_i (i = 0,\cdots,n-1)$ is defined as 
\begin{align*}
    w_i = \sharp\{(j,k)~|~j+k = i,0\leq j\leq n_1-1,0\leq k\leq n_2-1\}.
\end{align*}
In addition, let the operator $\calG$ be $\calG = \calH\calD^{-1}$. The adjoint of $\calG$, denoted by $\calG^\ast$, is given by $\calG^\ast = \calD^{-1}\calH^\ast$.  
%It can be easily verified that
%\begin{align*}
%    \calG^\ast\calG = \calI,\quad \opnorm{\calG} = 1,\quad \mbox{and}\quad \opnorm{\calG^\ast}\leq 1.
%\end{align*}
Letting $\mZ = \calH(\calX) = \calG\calD(\mX)$, one has $\calD(\mX) = \calG^\ast(\mZ)~\mbox{and}~ (\calI-\calG\calG^\ast)(\mZ) = \bzero$.
Moreover, define $\mD = \diag\lb\sqrt{w_0},\cdots,\sqrt{w_{n-1}}\rb$. We have $\calA_k\calD(\mX) = \mD\calA_k(\mX)$ for any matrix $\mX\in\C^{s\times n}$. Consequently, the problem~\eqref{eq: oripro} can be rewritten as 
\begin{align}\label{eq: voptp}
\min_{\{\mZ_k\}_{k = 1}^K} \sum_{k = 1}^K \nucnorm{\mZ_k} &\text{ s.t. }\sum_{k=1}^{K} \calA_k\calG^\ast(\mZ_k) = \mD\vy, \\
    &~ (\calI- \calG\calG^\ast)\lb\mZ_k\rb = \bzero, ~\mbox{for~}k = 1,\cdots,K.\notag
\end{align}
Due to the equivalence between \eqref{eq: oripro} and \eqref{eq: voptp}, it suffices to investigate the recovery guarantee of \eqref{eq: voptp}.

\subsection{Deterministic Optimality Condition}
%To this end, we need to construct a dual certificate which satisfies a set of sufficient conditions. These conditions can be viewed as a variant of the KKT condition for the optimality of $\{\mZ_i^{\natural}\}_{i = 1}^K$. 
Recall that the singular value decomposition (SVD) of $\calH(\mX_k^{\natural})$ is $\calH(\mX_k^{\natural}) = \mU_k\bSigma_k\mV_k^\tranH$. The tangent space $T_k$ is defined as
\begin{align*}
	T_k = \left\lbrace \mU_k\mA^\tranH+\mB\mV_k^\tranH:\mA\in\C^{n_2\times r},\mB\in\C^{sn_1\times r}\right\rbrace.
\end{align*}
The projection of  $\mM$ onto the tangent space $T_k$ is given by
\begin{align*}
    \calP_{T_k}(\mM) = \mU_k\mU_k^\tranH\mM+\mM\mV_k\mV_k^\tranH-\mU_k\mU_k^\tranH\mM\mV_k\mV_k^\tranH.
\end{align*}

\begin{theorem}\label{thm: unique theorem}
	Suppose for any $k\in[K]$,
	\begin{align}
 \label{eq: concentrate}
		\opnorm{\calP_{T_k}\calG\calA_k^\ast\calA_k\calG^\ast\calP_{T_k} -  \calP_{T_k}\calG\calG^\ast\calP_{T_k} } &\leq \frac{1}{4},\\
  \label{eq: incoh}
		\mu:=\max_{i\neq j}\opnorm{\calP_{T_i}\calG\calA_i^\ast\calA_j\calG^\ast\calP_{T_j}}&\leq \frac{1}{8K}.
	\end{align}
If there exists a series of $\{\mY_k\}_{k=1}^K\subset \C^{sn_1\times n_2}$ such that 
\begin{align}
\label{condition F norm}
\fronorm{\mU_k\mV_k^\ast-\calP_{T_k}( \mY_k)} &\leq \frac{1 }{48Ks\mu_0},\\
\label{condition O norm}
\opnorm{\calP_{T_k}^\perp (\mY_k)}&\leq \frac{1}{2},\\
\label{condition range}
\left(\calGT(\mY_1), \cdots, \calGT(\mY_K)\right)&\in\text{Range}(\calA^\ast).
\end{align}
Then $\left\{\mZ_k^\natural\right\}_{k=1}^K$ is the unique solution to problem~\eqref{eq: voptp}.
\end{theorem}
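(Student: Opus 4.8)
The plan is to establish the uniqueness of $\{\mZ_k^\natural\}$ via a standard dual-certificate argument adapted to the demixing setting. Let $\{\mZ_k\}_{k=1}^K$ be any feasible point of \eqref{eq: voptp} that differs from $\{\mZ_k^\natural\}$, and write $\bDelta_k = \mZ_k - \mZ_k^\natural$. Feasibility of both points forces $\sum_{k=1}^K \calA_k\calGT(\bDelta_k) = \bzero$ and $(\calI - \calG\calGT)(\bDelta_k) = \bzero$ for each $k$; the latter says each $\bDelta_k$ lies in the range of $\calG$ (equivalently, is a valid vectorized Hankel matrix). First I would use the subgradient inequality for the nuclear norm at each $\mZ_k^\natural$: picking subgradients of the form $\mU_k\mV_k^\ast + \mW_k$ with $\mW_k \in T_k^\perp$, $\opnorm{\mW_k}\le 1$, and choosing $\mW_k$ to align with $\calP_{T_k}^\perp(\bDelta_k)$ so that $\la \mW_k, \calP_{T_k}^\perp(\bDelta_k)\ra = \nucnorm{\calP_{T_k}^\perp(\bDelta_k)}$, gives
\begin{align*}
\sum_{k=1}^K \nucnorm{\mZ_k} - \sum_{k=1}^K \nucnorm{\mZ_k^\natural} \ge \sum_{k=1}^K \Big( \Real\la \mU_k\mV_k^\ast, \bDelta_k\ra + \nucnorm{\calP_{T_k}^\perp(\bDelta_k)} \Big).
\end{align*}

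Next I would insert the certificate $\{\mY_k\}$. Since $(\calGT(\mY_1),\dots,\calGT(\mY_K)) \in \text{Range}(\calA^\ast)$, there is a vector $\vlambda$ with $\calGT(\mY_k) = \calA_k^\ast \vlambda$ for all $k$; hence $\sum_k \la \mY_k, \bDelta_k\ra = \sum_k \la \calA_k^\ast\vlambda, \calGT(\bDelta_k)\ra = \la \vlambda, \sum_k \calA_k\calGT(\bDelta_k)\ra = 0$, using that $\bDelta_k$ is Hankel so $\calG\calGT(\bDelta_k) = \bDelta_k$ and hence $\la\mY_k,\bDelta_k\ra = \la\calGT(\mY_k),\calGT(\bDelta_k)\ra$. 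Subtracting this zero quantity and splitting $\mY_k = \calP_{T_k}(\mY_k) + \calP_{T_k}^\perp(\mY_k)$ turns the lower bound into
\begin{align*}
\sum_{k=1}^K \Big( \Real\la \mU_k\mV_k^\ast - \calP_{T_k}(\mY_k), \calP_{T_k}(\bDelta_k)\ra + \nucnorm{\calP_{T_k}^\perp(\bDelta_k)} - \Real\la \calP_{T_k}^\perp(\mY_k), \calP_{T_k}^\perp(\bDelta_k)\ra \Big).
\end{align*}
By \eqref{condition O norm} the last inner product is bounded by $\tfrac12 \nucnorm{\calP_{T_k}^\perp(\bDelta_k)}$, and by \eqref{condition F norm} with Cauchy–Schwarz the first is at least $-\tfrac{1}{48Ks\mu_0}\fronorm{\calP_{T_k}(\bDelta_k)}$. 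So the objective gap is at least $\sum_k \big( \tfrac12\nucnorm{\calP_{T_k}^\perp(\bDelta_k)} - \tfrac{1}{48Ks\mu_0}\fronorm{\calP_{T_k}(\bDelta_k)}\big)$.

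The final and most delicate step is to show that the on-tangent component $\fronorm{\calP_{T_k}(\bDelta_k)}$ cannot be large relative to the off-tangent components unless all $\bDelta_k$ vanish. Here I would exploit the constraint $\sum_k \calA_k\calGT(\bDelta_k) = \bzero$ together with the near-isometry \eqref{eq: concentrate} and the cross-incoherence \eqref{eq: incoh}. Applying $\calP_{T_k}\calG\calA_k^\ast$ to the constraint and using $\calGT(\bDelta_k) = \calGT\calP_{T_k}(\bDelta_k) + \calGT\calP_{T_k}^\perp(\bDelta_k)$, the $k$-th diagonal term $\calP_{T_k}\calG\calA_k^\ast\calA_k\calGT\calP_{T_k}(\bDelta_k)$ is within $\tfrac14\fronorm{\calP_{T_k}(\bDelta_k)}$ of $\calP_{T_k}(\bDelta_k)$ by \eqref{eq: concentrate} (since $\calP_{T_k}\calG\calGT\calP_{T_k}$ acts as the identity on $T_k \cap \text{Range}(\calG^\ast)$, which is where $\calP_{T_k}(\bDelta_k)$ sits), while each of the $K-1$ cross terms $\calP_{T_k}\calG\calA_k^\ast\calA_j\calGT\calP_{T_j}(\bDelta_j)$ is bounded by $\tfrac{1}{8K}\fronorm{\calP_{T_j}(\bDelta_j)}$ by \eqref{eq: incoh}, and the residual terms involving $\calP_{T_j}^\perp(\bDelta_j)$ are controlled via $\opnorm{\calGT} \le 1$ and $\fronorm{\cdot} \le \sqrt{r}\,\opnorm{\cdot}$ against $\nucnorm{\calP_{T_j}^\perp(\bDelta_j)}$. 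Summing over $k$ and using $\sum_k \fronorm{\calP_{T_k}(\bDelta_k)} \le$ a small multiple of itself plus $\sum_k \nucnorm{\calP_{T_k}^\perp(\bDelta_k)}$ closes a bootstrap: one obtains $\sum_k \fronorm{\calP_{T_k}(\bDelta_k)} \lesssim Ks\mu_0 \sum_k \nucnorm{\calP_{T_k}^\perp(\bDelta_k)}$ (the $s\mu_0$ factor coming from the operator $\calA_k\calGT$ bound, which must exactly match the $1/(48Ks\mu_0)$ in \eqref{condition F norm}), whence the objective gap is $\ge c\sum_k\nucnorm{\calP_{T_k}^\perp(\bDelta_k)} \ge 0$, with equality only if $\calP_{T_k}^\perp(\bDelta_k) = 0$ for all $k$, which then forces $\calP_{T_k}(\bDelta_k) = 0$ and hence $\bDelta_k = 0$. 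I expect the bootstrap bookkeeping — getting the constants in \eqref{eq: concentrate}, \eqref{eq: incoh}, \eqref{condition F norm}, \eqref{condition O norm} to line up so the chain of inequalities actually closes — to be the main obstacle; the rest is routine.
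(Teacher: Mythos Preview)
Your proposal follows exactly the paper's route: subgradient inequality at each $\mZ_k^\natural$, insertion of the certificates via $\sum_k\la\mY_k,\bDelta_k\ra=0$ (which you justify more carefully than the paper does), splitting onto $T_k$ and $T_k^\perp$, and a bootstrap controlling the on-tangent mass by the off-tangent mass. The paper packages the bootstrap as a separate Lemma~\ref{lemma 1} (stated with the explicit hypothesis $\opnorm{\calA_k}\le\sqrt{s\mu_0}$ and the squared-Frobenius constant $16Ks\mu_0$) and then plugs it in, whereas you sketch it inline; structurally the two arguments are the same.

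One technical slip to fix: the claim that $\calP_{T_k}(\bDelta_k)\in T_k\cap\mathrm{Range}(\calG)$ is false in general, so $\calP_{T_k}\calG\calG^\ast\calP_{T_k}$ does \emph{not} act as the identity on it. What you actually know is that $\bDelta_k\in\mathrm{Range}(\calG)$, hence
\[
\calP_{T_k}\calG\calG^\ast\calP_{T_k}(\bDelta_k)
=\calP_{T_k}\calG\calG^\ast\bigl(\bDelta_k-\calP_{T_k^\perp}(\bDelta_k)\bigr)
=\calP_{T_k}(\bDelta_k)-\calP_{T_k}\calG\calG^\ast\calP_{T_k^\perp}(\bDelta_k),
\]
so the diagonal term in your bootstrap differs from $\calP_{T_k}(\bDelta_k)$ by $\tfrac14\fronorm{\calP_{T_k}(\bDelta_k)}$ \emph{plus} an extra off-tangent contribution bounded by $\fronorm{\calP_{T_k^\perp}(\bDelta_k)}$. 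This extra term simply lands on the right-hand side alongside the other off-tangent residuals and is absorbed into the final constant; with this adjustment your bootstrap closes and recovers the content of Lemma~\ref{lemma 1}.
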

\begin{lemma}
	\label{lemma 1}
        Suppose the conditions~\eqref{eq: concentrate},~\eqref{eq: incoh}, and $\opnorm{\calA_k} \leq \sqrt{s\mu_0}$ holds.
	Assume $\mM_k$ obeys that
	\begin{align}\label{eq: feasible}
		\sum_{k=1}^K \calA_k \calG^\ast(\mM_k) = \bzero, \quad (\calI - \calG\calG^\ast)(\mM_k) = \bzero.
	\end{align}
	Then one has
	\begin{align*}
		\sum_{k=1}^K \fronorm{\calP_{T_k}(\mM_k)}^2  \leq 16Ks\mu_0 \sum_{k=1}^K  \fronorm{\calP_{T_k^\perp}(\mM_k)}^2.
	\end{align*}
\end{lemma}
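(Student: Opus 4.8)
The plan is to exploit the feasibility constraints \eqref{eq: feasible} together with the concentration and incoherence bounds \eqref{eq: concentrate}, \eqref{eq: incoh} to control each $\calP_{T_k}(\mM_k)$ in terms of the off-tangent components $\calP_{T_k^\perp}(\mM_k)$. The natural starting point is the first constraint $\sum_k \calA_k\calG^\ast(\mM_k) = \bzero$. Squaring its norm gives $0 = \sum_{i,j} \la \calA_i\calG^\ast(\mM_i), \calA_j\calG^\ast(\mM_j)\ra$, which I would rearrange as
\begin{align*}
\sum_{k=1}^K \twonorm{\calA_k\calG^\ast(\mM_k)}^2 = -\sum_{i\neq j}\la \calG^\ast(\mM_i), \calA_i^\ast\calA_j\calG^\ast(\mM_j)\ra.
\end{align*}
The left-hand side, after inserting $\mM_k = \calP_{T_k}(\mM_k) + \calP_{T_k^\perp}(\mM_k)$ and using $(\calI-\calG\calG^\ast)(\mM_k)=\bzero$ (so $\calG^\ast\calG$ acts as identity on the relevant range and $\calG^\ast(\mM_k)=\calG^\ast\calG\calG^\ast(\mM_k)$), should be lower-bounded using \eqref{eq: concentrate}: on the tangent space, $\calP_{T_k}\calG\calA_k^\ast\calA_k\calG^\ast\calP_{T_k} \succeq \calP_{T_k}\calG\calG^\ast\calP_{T_k} - \tfrac14\calI$, and since $\calG\calG^\ast$ is a projection that acts as the identity on $\text{Range}(\calG)\supseteq T_k$-images, this yields something like $\twonorm{\calA_k\calG^\ast\calP_{T_k}(\mM_k)}^2 \geq \tfrac34\fronorm{\calP_{T_k}(\mM_k)}^2$. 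The cross terms with $\calP_{T_k^\perp}$ and the right-hand side off-diagonal terms are then bounded above using $\opnorm{\calA_k}\leq\sqrt{s\mu_0}$, \eqref{eq: incoh}, and Cauchy–Schwarz.

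Concretely, the steps I would carry out in order are: (1) expand $\calA_k\calG^\ast(\mM_k) = \calA_k\calG^\ast\calP_{T_k}(\mM_k) + \calA_k\calG^\ast\calP_{T_k^\perp}(\mM_k)$ and substitute into the squared constraint; (2) apply \eqref{eq: concentrate} to get the lower bound $\gtrsim \fronorm{\calP_{T_k}(\mM_k)}^2$ on the diagonal tangent part; (3) bound the diagonal cross terms $\la \calA_k\calG^\ast\calP_{T_k}(\mM_k), \calA_k\calG^\ast\calP_{T_k^\perp}(\mM_k)\ra$ via $\opnorm{\calA_k}^2\leq s\mu_0$ and $\opnorm{\calG^\ast}\leq 1$, getting terms like $\sqrt{s\mu_0}\,\fronorm{\calP_{T_k}(\mM_k)}\cdot\sqrt{s\mu_0}\,\fronorm{\calP_{T_k^\perp}(\mM_k)}$; (4) bound the off-diagonal terms $\la \calG^\ast(\mM_i), \calA_i^\ast\calA_j\calG^\ast(\mM_j)\ra$ by splitting each $\mM$ into tangent and perpendicular parts — the tangent–tangent piece uses $\mu\leq\tfrac{1}{8K}$ from \eqref{eq: incoh}, the remaining pieces use $\opnorm{\calA_i^\ast\calA_j}\leq s\mu_0$; (5) collect everything into an inequality of the shape $c\sum_k\fronorm{\calP_{T_k}(\mM_k)}^2 \leq C s\mu_0\sum_k\fronorm{\calP_{T_k^\perp}(\mM_k)}^2 + (\text{small})\sum_k\fronorm{\calP_{T_k}(\mM_k)}^2$, absorb the small term into the left side using that $K\mu\leq\tfrac18$, and solve, tracking constants to land at the factor $16Ks\mu_0$.

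The main obstacle I anticipate is the bookkeeping in step (4)–(5): controlling the off-diagonal sum $\sum_{i\neq j}$ requires splitting into four pieces per pair, and the pieces involving one tangent and one perpendicular factor (or two perpendicular factors) are bounded by $s\mu_0$ rather than by the small quantity $\mu$, so one must be careful that these feed into the right-hand side (the $\calP_{T_k^\perp}$ terms) via Cauchy–Schwarz with a weight $\sqrt{K}$ coming from summing over the $K(K-1)$ pairs — and that the $K$-dependence ends up exactly linear, matching the stated bound. A secondary subtlety is justifying that $\calG^\ast$ has operator norm $1$ and that $(\calI-\calG\calG^\ast)(\mM_k)=\bzero$ lets us freely replace $\mM_k$ by $\calG\calG^\ast(\mM_k)$ so that $\calP_{T_k}$-decompositions interact correctly with $\calG^\ast$; this should follow from the definitions $\calG=\calH\calD^{-1}$, $\calG^\ast=\calD^{-1}\calH^\ast$ recalled in the excerpt, but needs to be invoked cleanly.
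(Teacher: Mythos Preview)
The paper does not actually include a proof of this lemma; it is stated and then invoked in the proof of Theorem~\ref{thm: unique theorem}, with details deferred. So there is no paper proof to compare against line by line. That said, your overall strategy---square the feasibility identity $\sum_k \calA_k\calG^\ast(\mM_k)=\bzero$, split each $\mM_k$ into tangent and orthogonal parts, lower-bound the tangent--tangent diagonal via \eqref{eq: concentrate}, control the tangent--tangent off-diagonal via \eqref{eq: incoh}, and push everything involving $\calP_{T_k^\perp}$ to the right using $\opnorm{\calA_k}\leq\sqrt{s\mu_0}$ and $\opnorm{\calG^\ast}\leq 1$---is the standard and correct route for lemmas of this type and will deliver the stated bound (with the factor $K$ arising exactly from Cauchy--Schwarz on $\big(\sum_k\fronorm{\calP_{T_k^\perp}(\mM_k)}\big)^2\leq K\sum_k\fronorm{\calP_{T_k^\perp}(\mM_k)}^2$).

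One concrete point needs repair. Your parenthetical claim that $\text{Range}(\calG)\supseteq T_k$, so that $\calP_{T_k}\calG\calG^\ast\calP_{T_k}=\calP_{T_k}$ and hence $\twonorm{\calA_k\calG^\ast\calP_{T_k}(\mM_k)}^2\geq \tfrac34\fronorm{\calP_{T_k}(\mM_k)}^2$, is false: elements $\mU_k\mA^\ast+\mB\mV_k^\ast$ of $T_k$ are generally not block-Hankel, so $T_k\not\subseteq\text{Range}(\calG)$. This is precisely why \eqref{eq: concentrate} is stated with $\calP_{T_k}\calG\calG^\ast\calP_{T_k}$ rather than $\calP_{T_k}$. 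The fix is to use the second feasibility constraint $(\calI-\calG\calG^\ast)(\mM_k)=\bzero$: since $\mM_k=\calG\calG^\ast(\mM_k)$, one has
\[
\fronorm{\calP_{T_k}(\mM_k)}^2=\la \calP_{T_k}(\mM_k),\calP_{T_k}\calG\calG^\ast\calP_{T_k}(\mM_k)\ra+\la \calP_{T_k}(\mM_k),\calP_{T_k}\calG\calG^\ast\calP_{T_k^\perp}(\mM_k)\ra,
\]
and the second term is bounded by $\fronorm{\calP_{T_k}(\mM_k)}\fronorm{\calP_{T_k^\perp}(\mM_k)}$ (using $\opnorm{\calG\calG^\ast}=1$), which is absorbed into the right-hand side via AM--GM. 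You flag exactly this interaction as a ``secondary subtlety'' at the end, so you are aware something is needed here---just replace the incorrect inclusion $T_k\subseteq\text{Range}(\calG)$ by this use of the constraint, and the rest of your outline goes through.
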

% Equipped with Lemma~\ref{lemma 1}, we readily derive Theorem~\ref{thm: unique theorem}.
\begin{proof}[Proof of Theorem~\ref{thm: unique theorem}] 
For any feasible solution of problem~\eqref{eq: voptp}, it must have the form of $\left\lbrace \mZ_k^{\natural}+\mM_k\right\rbrace_{k = 1}^K$, 
   % Consider any feasible solution $\left\lbrace \mZ_k^{\natural}+\mM_k\right\rbrace_{k = 1}^K$,
    where the perturbation matrices $\mM_k\in\C^{sn_1\times n_2}$ satisfy condition~\eqref{eq: feasible}.
 It suffices to show that 
	\begin{align*}
		\sum_{k = 1}^K \nucnorm{\mZ_k^{\natural}+\mM_k} > \sum_{k = 1}^K \nucnorm{\mZ_k^{\natural}}
	\end{align*}
	for any nontrivial set of $\left\lbrace \mM_k\right\rbrace_{k = 1}^K$. For each $\mM_k$, there  exists an   matrix $\mS_k\in T_k^{\perp}$ such that
	\begin{align*}
		\la \mM_k,\mS_k\ra = \nucnorm{\calP_{T_k^\perp}\lb\mM_k\rb} ~\mbox{and}~\opnorm{\mS_k}\leq 1.
	\end{align*}
	Therefore, $\mU_k\mV_k^\tran+\mS_k$  belongs to the sub-differential of $\nucnorm{\cdot}$ at $\mZ^\natural_k = \mU_k\bSigma_k\mV_k^\ast$.  By the definition of subgradient, one has
	\begin{align*}
		&\sum_{k = 1}^K \nucnorm{\mZ_k^{\natural}+\mM_k}\\
  &\geq \sum_{k = 1}^K\lb \nucnorm{\mZ_k^{\natural}}+\la \mU_k\mV_k^\ast+\mS_k,\mM_k\ra \rb\\
		& = \sum_{k = 1}^K\lb \nucnorm{\mZ_k^{\natural}}+\la \mU_k\mV_k^\ast,\mM_k\ra+ \nucnorm{\calP_{T_k^\perp}\lb\mM_k\rb} \rb\\
		%& = \sum_{i = 1}^K\lb \nucnorm{\mZ_i^{\natural}}+\la \mU_i\mV_i^\ast-\mY_i, \mM_i\ra+ \la \mY_i, \mM_i \ra + \nucnorm{\calP_{T_i^\perp}\lb\mM_i\rb} \rb\\
		&=\sum_{k = 1}^K\lb \nucnorm{\mZ_k^{\natural}}+\la \mU_k\mV_k^\ast-\mY_k, \mM_k\ra + \nucnorm{\calP_{T_k^\perp}\lb\mM_k\rb} \rb,
	\end{align*}
where the last is due to the fact that $\sum_{k=1}^{K}\la \mY_k, \mM_k \ra  = 0$. 
	To this end, we need to show that
	\begin{align*}
		I_1 := \sum_{k = 1}^K \left( \la \mU_k\mV_k^\ast-\mY_k,\mM_k\ra +   \nucnorm{\calP_{T_k^\perp}\lb\mM_k\rb} \right) >0.
	\end{align*}
	By decomposing the inner product in $I_1$ onto tangent space and its cotangent space respectively, one can obtain
	\begin{align*}
		I_1 &= \sum_{k = 1}^K \la \mU_k\mV_k^\ast-\calP_{T_k}\lb \mY_k \rb,\calP_{T_k}\lb\mM_k\rb\ra \\
  &\quad - \sum_{k= 1}^K\la\calP_{T_k}^\perp\lb \mY_k\rb,\calP_{T_k}^\perp\lb\mM_k\rb\ra+  \sum_{k = 1}^K\nucnorm{\calP_{T_k^\perp}\lb\mM_k\rb}\\
		&\geq -\sum_{k = 1}^K\fronorm{\calP_{T_k}\lb\mM_k\rb }\fronorm{\mU_k\mV_k^\ast-\calP_{T_k}\lb \mY_k\rb}\\
  &\quad +\sum_{k = 1}^K\nucnorm{\calP_{T_k^\perp}\lb\mM_k\rb}\lb 1-\opnorm{\calP_{T_k}^\perp\lb \mY_k\rb}\rb\\
		%&\stackrel{(a)}{\geq}-\frac{1}{48Ks\mu_0 }\sum_{k = 1}^K\fronorm{\calP_{T_k}\lb\mM_k\rb }\\
  %&\quad +(1-0.5)\sum_{k = 1}^K\fronorm{\calP_{T_k^\perp}\lb\mM_k\rb}\\
		&\stackrel{(a)}{\geq} \lb-\frac{1}{48Ks\mu_0 } \cdot 16Ks\mu_0+\frac{1}{2} \rb\sum_{k=1}^K  \fronorm{\calP_{T_k^\perp}(\mM_k)}^2  \\
		&=\frac{1}{6}\sum_{k=1}^K  \fronorm{\calP_{T_k^\perp}(\mM_k)}^2 >0,
	\end{align*}
where step (a) follows from \eqref{condition F norm} and \eqref{condition O norm}, step (b) is due to Lemma \ref{lemma 1}, and the last inequality uses the fact that   if $\calP_{T_k^\perp}(\mM_k) = \bzero$ for all $1\leq k\leq K$, then $\mM_k = \bzero.$

\end{proof}

\subsection{Construction of Dual Certificate}
Following the well developed route in \cite{gross2011recovering}, the dual certificates $\{\mY_k\}_{k=1}^K$ are constructed as follows: for any $k\in[K]$, we firstly divide the linear measurements $\calA_k$ into $t_0$ partitions, denoted $\{\Omega_t\}_{t=1}^{t_0}$, and let $m=\frac{n}{t_0}$. Define 
\begin{align*}
	&\calA_{k,t}(\mX)=\left\{ \la \vb_{k, j}\ve_{j}^\tran,\mX\ra \right\}_{j\in\Omega_t} \in \C^{\lab\Omega_t\rab}.\numberthis\label{eq:partA}
\end{align*}
Let $\mY_k^0 = \bzero $ and $\blambda^0 := \sum_{k=1}^{K} \calA_{k,1}\calG^\ast(\mU_k \mV_k^\tranH)$. Then the golfing scheme for constructing $\{\mY_k\}$ are expressed as
\begin{align*}
\blambda^{t-1}&= \sum_{k=1}^{K} \calA_{k,t}\calG^\ast\calP_{T_k}(\mU_k \mV_k^\tranH - \mY_{k,t-1})\text{ and }\\
\mY_{k,t}&=\mY_{k, t-1} +\frac{n}{m} \calG\calA^\ast_{k,t} (\blambda^{t-1})\\
        &\quad + \left(\calI - \calG\calG^\ast\right)\calP_{T_k} \left( \mU_k \mV_k^\tranH - \mY_{k,t-1}\right), \\
        &\quad \text{ for }t=1,\cdots, t_0,\\
\mY_k &=\mY_{k,t_0}.        
\end{align*} 

\subsection{Validating the Dual Certificate and Completing the Proof}
In this section, we show that the constructed dual certificate satisfies the conditions \eqref{condition F norm} and \eqref{condition O norm}.  The derivation relies on a series of lemmas. Due to space constraints, a detailed proof of these lemmas will be provided in \cite{wang2023blind}.
\begin{lemma}
Let $t_0\in\{1,\cdots,n\}$ and set $m = \frac{n}{t_0}$. If $m\gtrsim K\sqrt{n}\sqrt{s\mu_0\mu_1r}\log(sn)$, then there exists a partition $\{\Omega_t\}_{t = 1}^{t_0}$ such that the following properties hold: 
\begin{align}
\label{eq: incoherece A z}
    &\opnorm{\calG\calA_{i,t}^\ast \calA_{j,t}\calG^\ast(\mM) } \lesssim \alpha,\\
    \label{eq: lemmawang1}
    &\left\| \calP_{T_i}\calG\calA_{i,t}^\ast \calA_{j,t}\calG^\ast\calP_{T_j} (\mM)\right\|_{\calG,\mathsf{F}}\lesssim \sqrt{\frac{\mu_1 r\log(sn)}{n}}\cdot\alpha,\\
    \label{eq: lemmawang2}
    &\left\| \calP_{T_i}\calG\calA_{i,t}^\ast \calA_{j,t}\calG^\ast\calP_{T_j} (\mM)\right\|_{\calG,\infty}\lesssim \frac{\mu_1 r}{n}\cdot\alpha,
\end{align} 
where $\alpha :=  \sqrt{s\mu_0 \log (sn)} \left\| \mM\right\|_{\calG, F} + s\mu_0\log(sn) \left\|\mM\right\|_{\calG,\infty}$, $\mM\in\C^{sn_1\times n_2}$ is fixed, and the definitions of  $\|\cdot\|_{\calG,\mathsf{F}}$ and $\|\cdot\|_{\calG,\infty}$ can be found in~\cite{chen2022vectorized}.
\end{lemma}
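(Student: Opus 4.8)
The plan is to prove the three estimates \eqref{eq: incoherece A z}--\eqref{eq: lemmawang2} by matrix (and vector) Bernstein concentration for one fixed equipartition, and then to obtain the ``there exists a partition'' statement by a union bound. Fix any balanced partition $\{\Omega_t\}_{t=1}^{t_0}$ of $\{0,\dots,n-1\}$ with $|\Omega_t|=m$. I would show that, for each fixed $t$, each pair $(i,j)\in[K]^2$, and the fixed $\mM$, each of \eqref{eq: incoherece A z}--\eqref{eq: lemmawang2} holds with probability at least $1-(sn)^{-C}$ for a large absolute constant $C$ over the random columns $\{\vb_{k,\ell}\}$, then union bound over the $O(K^2t_0)$ blocks and pairs (and, for \eqref{eq: lemmawang2}, over the $n$ column indices as well). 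Since $m\le n$, the hypothesis $m\gtrsim K\sqrt n\sqrt{s\mu_0\mu_1r}\log(sn)$ forces $K\le\sqrt n$ and $t_0=n/m\le\sqrt n$, so the union bound costs only an $O(\log(sn))$ factor, which is already absorbed in $\alpha$; this hypothesis is moreover exactly the scaling that, multiplied by $\tfrac nm$ in the dual-certificate recursion above, makes the residual $\calP_{T_k}(\mU_k\mV_k^\tranH-\mY_{k,t})$ contract geometrically.

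The starting point is the identity $\la\calG(\vb_{j,\ell}\ve_\ell^\tran),\mM\ra=\vb_{j,\ell}^\ast(\calG^\ast\mM)\ve_\ell$, which gives
\[
\calG\calA_{i,t}^\ast\calA_{j,t}\calG^\ast(\mM)=\sum_{\ell\in\Omega_t}\la\mathcal{B}_{j,\ell},\mM\ra\,\mathcal{B}_{i,\ell},\qquad \mathcal{B}_{k,\ell}:=\calG(\vb_{k,\ell}\ve_\ell^\tran)=\calH\calD^{-1}(\vb_{k,\ell}\ve_\ell^\tran),
\]
a sum of matrices independent over $\ell\in\Omega_t$ whose two factors are, for $i\neq j$, also independent within each summand (the $\mB_k$ are independent by Assumption~\ref{assumption 1}). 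Two structural facts about $\calG=\calH\calD^{-1}$ are used throughout: $\calG^\ast\calG=\calI$, so the norms $\gfronorm{\cdot}$ and $\ginfnorm{\cdot}$ of \cite{chen2022vectorized} -- built from $\calG^\ast=\calD^{-1}\calH^\ast$ -- satisfy $\gfronorm{\mZ}\le\fronorm{\mZ}$ and $\ginfnorm{\mZ}=\max_\ell\twonorm{(\calG^\ast\mZ)\ve_\ell}$; and $\mathcal{B}_{k,\ell}$ is supported on the $\ell$-th block anti-diagonal and consists of $w_\ell$ copies of $w_\ell^{-1/2}\vb_{k,\ell}$, one per block row and one per block column, so $\fronorm{\mathcal{B}_{k,\ell}}=\twonorm{\vb_{k,\ell}}\le\sqrt{s\mu_0}$, $\opnorm{\mathcal{B}_{k,\ell}}=w_\ell^{-1/2}\twonorm{\vb_{k,\ell}}\le\sqrt{s\mu_0}$, and $\mathbb{E}[\mathcal{B}_{k,\ell}\mathcal{B}_{k,\ell}^\ast]$, $\mathbb{E}[\mathcal{B}_{k,\ell}^\ast\mathcal{B}_{k,\ell}]$ are block diagonal with operator norm $\lesssim s/w_\ell$.

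For \eqref{eq: incoherece A z} I apply matrix Bernstein to the above sum: a single summand has operator norm $\lesssim s\mu_0\,\ginfnorm{\mM}=:R$ (bounding $|\la\mathcal{B}_{j,\ell},\mM\ra|$ by $\twonorm{\vb_{j,\ell}}\,\ginfnorm{\mM}$ and using $w_\ell\ge1$ for $\opnorm{\mathcal{B}_{i,\ell}}$), while the matrix variance is $\sigma^2\lesssim s\mu_0\,\gfronorm{\mM}^2$ (summing the block-diagonal PSD terms $|\la\mathcal{B}_{j,\ell},\mM\ra|^2\,\mathbb{E}[\mathcal{B}_{i,\ell}\mathcal{B}_{i,\ell}^\ast]$ and its transpose, using $\sum_{\ell}|\la\mathcal{B}_{j,\ell},\mM\ra|^2\le\gfronorm{\mM}^2$); Bernstein then returns $\sqrt{\sigma^2\log(sn)}+R\log(sn)\lesssim\alpha$. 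For \eqref{eq: lemmawang1} and \eqref{eq: lemmawang2} the same scheme is run on $\sum_{\ell\in\Omega_t}\la\mathcal{B}_{j,\ell},\calP_{T_j}\mM\ra\,\calP_{T_i}\mathcal{B}_{i,\ell}$ and, respectively, on its $q$-th column $\sum_{\ell\in\Omega_t}\la\mathcal{B}_{j,\ell},\calP_{T_j}\mM\ra\,(\calG^\ast\calP_{T_i}\mathcal{B}_{i,\ell})\ve_q$, now as a Hilbert-space vector Bernstein inequality in $\gfronorm{\cdot}$, respectively in $\twonorm{\cdot}$ followed by a union bound over $q$. The extra factor is supplied by Assumption~\ref{assumption 2}: the block incoherence $\max_\ell\fronorm{(\mU_i)_\ell}^2\le\mu_1r/n$ and $\max_q\twonorm{\ve_q^\tran\mV_i}^2\le\mu_1r/n$ together with the anti-diagonal support of $\mathcal{B}_{i,\ell}$ gives $\gfronorm{\calP_{T_i}\mathcal{B}_{i,\ell}}\le\fronorm{\calP_{T_i}\mathcal{B}_{i,\ell}}\lesssim\sqrt{s\mu_0\mu_1r/n}$ and, applied a second time to isolate one column, $\twonorm{(\calG^\ast\calP_{T_i}\mathcal{B}_{i,\ell})\ve_q}\lesssim\sqrt{s\mu_0}\cdot\mu_1r/n$ (up to logs); these factors pass through both the per-summand bound and the variance, producing $\sqrt{\mu_1r\log(sn)/n}\cdot\alpha$ and $\tfrac{\mu_1r}{n}\cdot\alpha$ respectively.

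The main obstacle is the bookkeeping with the weight sequence $\{w_\ell\}$ and the block anti-diagonal support of $\mathcal{B}_{k,\ell}$: the $O(1)$ indices $\ell$ near $0$ and $n-1$ where $w_\ell$ is small make $\opnorm{\mathcal{B}_{k,\ell}}$ as large as $\sqrt{s\mu_0}$, and one must verify that the interaction of this support with the two tangent-space projections yields exactly the advertised powers of $\mu_1r/n$ and not, say, $\mu_1r/w_\ell$. This is where the precise definitions of $\gfronorm{\cdot}$ and $\ginfnorm{\cdot}$ from \cite{chen2022vectorized} -- which absorb the $w_\ell$ weighting through $\calD^{-1}$ -- do the real work, and it is the step I expect to consume most of the proof; once the parameters $R$ and $\sigma^2$ in each of the three regimes are pinned down, the Bernstein applications and the final union bound are routine.
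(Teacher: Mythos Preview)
The paper does not actually prove this lemma: immediately before stating it the authors write ``Due to space constraints, a detailed proof of these lemmas will be provided in \cite{wang2023blind}.'' So there is no in-paper argument to compare against. That said, your plan is exactly the natural one and almost certainly what the authors intend. The paper already invokes Lemmas~3.11--3.13 of \cite{chen2022vectorized} for the diagonal terms $i=j$ (centered by $\calG\calG^\ast$), and uses the present lemma only for the cross terms $i\neq j$; your decomposition
\[
\calG\calA_{i,t}^\ast\calA_{j,t}\calG^\ast(\mM)=\sum_{\ell\in\Omega_t}\la\mathcal{B}_{j,\ell},\mM\ra\,\mathcal{B}_{i,\ell},\qquad \mathcal{B}_{k,\ell}=\calG(\vb_{k,\ell}\ve_\ell^\tran),
\]
followed by matrix/vector Bernstein with the incoherence of $\mU_i,\mV_i$ supplying the $\sqrt{\mu_1 r/n}$ and $\mu_1 r/n$ factors, is precisely the $i\neq j$ analogue of those lemmas. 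Your identification of the $w_\ell$ bookkeeping as the main technical nuisance, and of the $\gfronorm{\cdot},\ginfnorm{\cdot}$ norms as the device that absorbs it, is also on the mark.

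Two small points worth flagging. First, the lemma as written is tacitly for $i\neq j$: when $i=j$ the expectation of the sum is $\calG\calG^\ast(\mM)$ restricted to $\Omega_t$, which is not $\lesssim\alpha$ in general, and indeed the paper never applies \eqref{eq: incoherece A z}--\eqref{eq: lemmawang2} with $i=j$. Second, your Bernstein application needs the summands to be mean zero; for $i\neq j$ this comes from the independence of $\mB_i$ and $\mB_j$ together with $\E[\vb]=\bzero$, which is true for the Rademacher and DFT ensembles mentioned after Assumption~\ref{assumption 1} but is not literally stated in \eqref{isotropy}. You should either add this to the hypotheses or condition on $\{\vb_{j,\ell}\}$ and center before applying Bernstein.
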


\begin{lemma}\label{lem: wang2}
Assume $n\gtrsim K\mu_0\mu_1sr\log(sn)$ and $m\gtrsim \frac{n}{K}$, for any $i\neq j$ and $1\leq t\leq t_0$,  the event
\begin{align*}
	\opnorm{\calP_{T_i}\calG\calA_{i,t}^\ast \calA_{j,t}\calG^\ast\calP_{T_j}} \leq \frac{1}{8K},
\end{align*}
occurs with probability at least $1-(sn)^{-c_1}$ for a universal constant $c_1>0$.
\end{lemma}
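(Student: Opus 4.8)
\textbf{Proof plan for Lemma~\ref{lem: wang2}.}
The plan is to control the operator norm of $\calP_{T_i}\calG\calA_{i,t}^\ast \calA_{j,t}\calG^\ast\calP_{T_j}$ by viewing it as a sum of independent random operators indexed by $j\in\Omega_t$ and applying a matrix Bernstein inequality (the operator-norm version, e.g.\ from \cite{gross2011recovering}). First I would write out
\[
\calP_{T_i}\calG\calA_{i,t}^\ast \calA_{j,t}\calG^\ast\calP_{T_j} = \sum_{\ell\in\Omega_t} \calP_{T_i}\calG\left(\vb_{i,\ell}\ve_\ell^\tran\right)\otimes\left(\vb_{j,\ell}\ve_\ell^\tran\right)\calG^\ast\calP_{T_j},
\]
so that each summand is a rank-one-type operator built from the independent rows $\vb_{i,\ell},\vb_{j,\ell}$ (independence across $\ell$ comes from Assumption~\ref{assumption 1}, and independence between the two subspace matrices for $i\neq j$ makes the mean of each summand zero, so there is no centering term to subtract). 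The three quantities needed for matrix Bernstein are: (i) the mean is $\bzero$; (ii) a uniform bound on the operator norm of each summand; and (iii) a bound on the operator norm of the "matrix variance" $\sum_\ell \E[\,\cdot\,]$. For (ii) I would use the incoherence Assumption~\ref{assumption 2} — the quantities $\fronorm{\mU_{i,\ell}}^2$ and $\twonorm{\ve_\ell^\tran\mV_i}^2$ are $O(\mu_1 r/n)$ — together with the entrywise bound $|\vb[j]|^2\le\mu_0$ from Assumption~\ref{assumption 1}, to show each summand has norm $\lesssim \frac{s\mu_0\mu_1 r}{n}$ after the projections. For (iii) a similar computation, using $\E[\vb\vb^\ast]=\mI_s$ and again the incoherence bounds, gives a variance proxy of size $\lesssim \frac{|\Omega_t|}{n}\cdot\frac{s\mu_0\mu_1 r}{n}\lesssim \frac{m}{n}\cdot\frac{s\mu_0\mu_1 r}{n}$.

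With these three ingredients, matrix Bernstein yields that with probability at least $1-(sn)^{-c_1}$,
\[
\opnorm{\calP_{T_i}\calG\calA_{i,t}^\ast \calA_{j,t}\calG^\ast\calP_{T_j}} \lesssim \sqrt{\frac{m}{n}\cdot\frac{s\mu_0\mu_1 r\log(sn)}{n}} + \frac{s\mu_0\mu_1 r\log(sn)}{n}.
\]
I would then plug in the two hypotheses of the lemma. The condition $n\gtrsim K\mu_0\mu_1 sr\log(sn)$ makes the linear (second) term $\lesssim \frac{1}{K}$ with a small enough constant, hence $\le \frac{1}{16K}$; and combining $m\lesssim n$ (trivially $m\le n$, or more precisely $m\gtrsim n/K$ used from the other direction just ensures $t_0\lesssim K$ so the partition exists) with the same sample-complexity bound on $n$ controls the square-root (first) term: $\frac{m}{n}\le 1$ and $\frac{s\mu_0\mu_1 r\log(sn)}{n}\lesssim\frac{1}{K^2}$ after possibly strengthening the constant in $n\gtrsim K\mu_0\mu_1 sr\log(sn)$ to $n\gtrsim K^2\mu_0\mu_1 sr\log(sn)$ — I should check whether the paper's stated hypothesis already suffices or whether one uses $\frac{m}{n}\lesssim\frac{1}{K}$ instead, which it does not since $m\gtrsim n/K$; so the cleaner route is to note $\frac{m}{n}\cdot\frac{s\mu_0\mu_1 r\log(sn)}{n}\lesssim 1\cdot\frac{1}{K}\cdot\frac{1}{K}$ when $n\gtrsim K\mu_0\mu_1 sr\log(sn)$ fails to give $1/K^2$ directly, so I would actually exploit that we only need $m\gtrsim n/K$ as a \emph{lower} bound on $m$ for the partition count, while for the variance term the relevant quantity is $m\le n$, giving $\sqrt{\frac{s\mu_0\mu_1 r\log(sn)}{n}}\lesssim\frac{1}{\sqrt{K}}$, which is not $\le\frac{1}{16K}$. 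The correct reading, which I would make precise, is that the partition uses $t_0$ pieces with $m=n/t_0$ and the golfing analysis only ever needs $t_0 = O(\log(sn))$ many pieces, so effectively $m\asymp n/\log(sn)$; then the variance term is $\sqrt{\frac{1}{\log(sn)}\cdot\frac{s\mu_0\mu_1 r\log(sn)}{n}}=\sqrt{\frac{s\mu_0\mu_1 r}{n}}\lesssim\frac{1}{K}$ under $n\gtrsim K^2\mu_0\mu_1 sr$ — so I expect the genuinely needed sample size carries a $K^2$, or the constant in the lemma's hypothesis is meant in the strong sense $n\gtrsim (\text{const})K^2\mu_0\mu_1 sr\log(sn)$ absorbed into $\gtrsim$.

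The main obstacle is therefore bookkeeping the dependence on $K$ and on the partition size $m$ correctly: getting the bound down to $\frac{1}{8K}$ rather than merely $\frac{1}{8}$ requires that the "off-diagonal" cross term $\calA_i^\ast\calA_j$ be genuinely smaller than the diagonal concentration in \eqref{eq: concentrate} by a factor $1/K$, and this must come out of the variance computation (it does, because for $i\neq j$ the mean vanishes, so one is bounding a pure fluctuation of size $\sqrt{1/n}$-ish rather than an $O(1)$ quantity). A secondary technical point is handling the operators through the weighting $\calD$ / $\calG=\calH\calD^{-1}$: I would absorb the weights $w_i$ into the norm bounds exactly as in \cite{chen2022vectorized}, noting $1\le w_i\le\min(n_1,n_2)$ and that the $\|\cdot\|_{\calG,\mathsf F}$, $\|\cdot\|_{\calG,\infty}$ norms are designed precisely so these weighted estimates behave like the unweighted ones; the previous lemma (eqs.~\eqref{eq: incoherece A z}–\eqref{eq: lemmawang2}) already packages most of what is needed, and in fact the cleanest exposition is to derive Lemma~\ref{lem: wang2} as a corollary of that lemma by a union bound over the at most $t_0$ partitions and the $K^2$ pairs $(i,j)$, choosing $\mM=\mU_j\mV_j^\tranH$-type test matrices and converting the $\|\cdot\|_{\calG,\mathsf F}$ bound into an operator-norm bound via $\opnorm{\cdot}\le\|\cdot\|_{\calG,\mathsf F}$ on the rank-$2r$ range.
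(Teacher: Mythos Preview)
Your matrix Bernstein strategy is the natural one, and since the paper defers the proof of this lemma entirely to \cite{wang2023blind} there is no in-paper argument to compare against. That said, your plan contains a real gap that you yourself flag but never close: the $K$-dependence.

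Run the Bernstein calculation cleanly. Writing $\vu_\ell=\calP_{T_i}\calG(\vb_{i,\ell}\ve_\ell^\tran)$ and $\vv_\ell=\calP_{T_j}\calG(\vb_{j,\ell}\ve_\ell^\tran)$ (viewed as vectors), each summand is rank one with $\opnorm{\vu_\ell\vv_\ell^{\,*}}=\|\vu_\ell\|\,\|\vv_\ell\|\lesssim \mu_0\mu_1 sr/n$ by incoherence. For the variance, exploiting the independence of $\vu_\ell$ and $\vv_\ell$ when $i\neq j$ gives
\[
\Bigl\|\sum_{\ell\in\Omega_t}\E{\|\vu_\ell\|^2}\,\E{\vv_\ell\vv_\ell^{\,*}}\Bigr\|
\;\le\;\max_\ell \E{\|\vu_\ell\|^2}\cdot\Bigl\|\sum_{\ell\in\Omega_t}\E{\vv_\ell\vv_\ell^{\,*}}\Bigr\|
\;\lesssim\;\frac{\mu_1 r}{n}\cdot 1,
\]
since $\sum_\ell\E{\vv_\ell\vv_\ell^{\,*}}=\calP_{T_j}\calG\calR_{\Omega_t}\calG^*\calP_{T_j}$ has norm at most $1$ regardless of $|\Omega_t|$. (Your variance estimate $\tfrac{m}{n}\cdot\tfrac{s\mu_0\mu_1 r}{n}$ is not what the computation gives; there is no $m/n$ gain here because $\calR_{\Omega_t}$ is a projection, and the $s\mu_0$ factor disappears once you take the expectation in $\vb_{i,\ell}$.) Bernstein then yields
\[
\opnorm{\calP_{T_i}\calG\calA_{i,t}^*\calA_{j,t}\calG^*\calP_{T_j}}
\;\lesssim\;\sqrt{\frac{\mu_1 r\log(sn)}{n}}+\frac{\mu_0\mu_1 sr\log(sn)}{n}.
\]
Under the stated hypothesis $n\gtrsim K\mu_0\mu_1 sr\log(sn)$ the second term is $\lesssim 1/K$, but the square-root term is only $\lesssim 1/\sqrt{K\mu_0 s}$, which is not $\le 1/(8K)$ in general. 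Your meandering paragraph (``so I expect the genuinely needed sample size carries a $K^2$\ldots'') is precisely this obstruction, and none of your proposed workarounds---using $m\gtrsim n/K$ (a lower bound on $m$, hence useless for shrinking the variance), or appealing to \eqref{eq: lemmawang1} on test matrices---actually produces the missing factor of $1/\sqrt{K}$. You need either a sharper variance bound that genuinely exploits the cross structure, or to concede a $K^2$ in the sample complexity; the plan as written does neither.

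A second, smaller gap: your assertion that the mean is $\bzero$ ``because the two subspace matrices are independent for $i\neq j$'' is incomplete. Independence of $\vb_{i,\ell}$ and $\vb_{j,\ell}$ only makes the expectation factor; it does not make it vanish unless $\E{\vb}=\bzero$, which Assumption~\ref{assumption 1} does \emph{not} state. For the DFT example in the remark following that assumption one has $\E{\vb}\propto\ve_0\neq\bzero$, so you would need to bound the deterministic bias $\calP_{T_i}\calG\bigl(\E{\vb}\,\E{\vb}^{\tranH}\cdot\calR_{\Omega_t}\bigr)\calG^*\calP_{T_j}$ separately, or add a mean-zero hypothesis.
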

\begin{corollary}\label{cor: cor1}
    Assume $n\gtrsim K\mu_0\mu_1sr\log(sn)$ and $m\gtrsim \frac{n}{K}$, for any $i\neq j$,  the event
    \begin{align*}
	\opnorm{\calP_{T_i}\calG\calA_{i}^\ast \calA_{j}\calG^\ast\calP_{T_j}} \leq \frac{1}{8K},
\end{align*}
occurs with probability at least $1-(sn)^{-c_1}$ for a universal constant $c_1>0$.
\end{corollary}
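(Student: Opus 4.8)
The plan is to deduce Corollary~\ref{cor: cor1} from Lemma~\ref{lem: wang2} by summing the per-partition bounds over the $t_0$ blocks $\{\Omega_t\}_{t=1}^{t_0}$. First I would recall that, by the construction in~\eqref{eq:partA}, the full operator $\calA_j$ decomposes blockwise across the partition: for a fixed matrix $\mX$ we have $\calA_j(\mX) = \bigl(\calA_{j,1}(\mX),\dots,\calA_{j,t_0}(\mX)\bigr)$ after a suitable reindexing of coordinates, and correspondingly $\calA_i^\ast \calA_j = \sum_{t=1}^{t_0} \calA_{i,t}^\ast \calA_{j,t}$ since the index sets $\Omega_t$ are disjoint and cover $\{0,\dots,n-1\}$. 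Conjugating by $\calG$ and projecting onto the tangent spaces, linearity gives
\begin{align*}
\calP_{T_i}\calG\calA_{i}^\ast \calA_{j}\calG^\ast\calP_{T_j} = \sum_{t=1}^{t_0} \calP_{T_i}\calG\calA_{i,t}^\ast \calA_{j,t}\calG^\ast\calP_{T_j},
\end{align*}
so by the triangle inequality for the operator norm,
\begin{align*}
\opnorm{\calP_{T_i}\calG\calA_{i}^\ast \calA_{j}\calG^\ast\calP_{T_j}} \leq \sum_{t=1}^{t_0}\opnorm{\calP_{T_i}\calG\calA_{i,t}^\ast \calA_{j,t}\calG^\ast\calP_{T_j}}.
\end{align*}

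The subtlety is that Lemma~\ref{lem: wang2} as stated bounds each summand by $\frac{1}{8K}$, whereas I need the \emph{sum} over $t$ to be at most $\frac{1}{8K}$. So the honest route is to re-examine the scaling: Lemma~\ref{lem: wang2} should really give a per-block bound of order $\frac{m}{n}\cdot\frac{1}{8K}$ (the factor $m/n = 1/t_0$ appearing because each $\calA_{i,t}^\ast\calA_{j,t}$ carries only a $1/t_0$ fraction of the measurements), and summing $t_0$ such terms recovers $\frac{1}{8K}$. Concretely I would pick $t_0$ to be a fixed small constant — consistent with the hypothesis $m \gtrsim n/K$, which forces $t_0 \lesssim K$ — apply Lemma~\ref{lem: wang2} to each of the $t_0$ blocks with the appropriately sharpened constant, and take a union bound: the failure probability is at most $t_0 \cdot (sn)^{-c_1'} \leq (sn)^{-c_1}$ after adjusting $c_1$, since $t_0 = O(1)$ (or at worst $O(K)$, still absorbable into the polynomial). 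This yields the claimed bound $\opnorm{\calP_{T_i}\calG\calA_{i}^\ast \calA_{j}\calG^\ast\calP_{T_j}} \leq \frac{1}{8K}$ with probability at least $1-(sn)^{-c_1}$ under $n\gtrsim K\mu_0\mu_1 sr\log(sn)$.

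I would then note that the bound holds simultaneously for all ordered pairs $i\neq j$ by a final union bound over the at most $K^2$ pairs, again absorbed into the polynomial factor by enlarging $c_1$, which is exactly the statement needed to verify the incoherence condition~\eqref{eq: incoh} in Theorem~\ref{thm: unique theorem} with $\mu \leq \frac{1}{8K}$. The main obstacle I anticipate is purely bookkeeping rather than conceptual: making sure the constants in Lemma~\ref{lem: wang2} are tracked so that the $t_0$-fold sum (and the $K^2$-fold union bound) does not overwhelm the target $\frac{1}{8K}$ or the failure probability — i.e., that the sample-complexity condition $n\gtrsim K\mu_0\mu_1 sr\log(sn)$ already has enough slack built into its implied constant to accommodate these finite sums. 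No new probabilistic input is required; everything follows from linearity of $\calA_j$ over the partition plus Lemma~\ref{lem: wang2}.
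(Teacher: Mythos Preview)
Your decomposition $\calA_i^\ast\calA_j=\sum_{t=1}^{t_0}\calA_{i,t}^\ast\calA_{j,t}$ is correct, but the route through it is needlessly indirect and, as you yourself noticed, does not follow from Lemma~\ref{lem: wang2} \emph{as stated}: the lemma gives $\frac{1}{8K}$ for each block, so the triangle inequality only yields $\frac{t_0}{8K}$. Your proposed remedy --- arguing that the lemma ``should really'' give a per-block bound of order $\frac{m}{n}\cdot\frac{1}{8K}$ --- is not a deduction from the lemma but a re-opening of its proof, and in fact such a linear-in-$m$ scaling is not what one expects from the matrix concentration argument underlying Lemma~\ref{lem: wang2} (a sum of $m$ mean-zero terms typically concentrates at rate $\sqrt{m}$, not $m$).

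The paper treats the statement as an immediate corollary: the proof of Lemma~\ref{lem: wang2} (deferred to \cite{wang2023blind}) works for any batch size $m$ satisfying the stated hypotheses, so one simply specializes it to the trivial partition $t_0=1$, i.e.\ $\Omega_1=\{0,\dots,n-1\}$ and $m=n$. Then $\calA_{i,1}=\calA_i$, $\calA_{j,1}=\calA_j$, the hypothesis $m\gtrsim n/K$ is vacuous, and the conclusion of Lemma~\ref{lem: wang2} is exactly the bound in Corollary~\ref{cor: cor1}. No summation, no loss of constants, and no need to revisit the lemma's internals. Your final remark about a union bound over the $K(K-1)$ ordered pairs to obtain~\eqref{eq: incoh} is fine.
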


Equipped with these lemmas, we turn to validate the conditions in Theorem~\ref{thm: unique theorem}. Note that the inequality~\eqref{eq: concentrate} is proved by Corollary 3.10 in \cite{chen2022vectorized}, and the inequality~\eqref{eq: incoh} follows from Corollary~\ref{cor: cor1}. It is not hard to see that \eqref{condition range} holds by the construction of $\{\mY_k\}_{k = 1}^K$. Hence, it remains to verify inequalities \eqref{condition F norm} and \eqref{condition O norm}.
\subsubsection{Validating~\eqref{condition F norm}} let $\mE_{k,t} := \mU_k \mV_k^\tranH - \calP_{T_k}(\mY_{k,t})$. A simple calculation yields that
 \begin{align*}
 \mY_{k,t} &= \mY_{k,t-1}  +\frac{n}{m} \calG\calA^\ast_{k,t} (\blambda^{t-1})+  \left(\calI - \calG\calG^\ast\right)(\mE_{k,t}),\\
     \mE_{k,t} %&=\mU_i \mV_i^\tranH - \calP_{T_i}(\mY_{i,t})\\
	%&=\mU_i \mV_i^\tranH - \calP_{T_i}(\mY_{i,t-1}) - \frac{n}{m}  \calP_{T_i} \calG\calA^\ast_{i,t} (\blambda^{t-1})\\
 %&\quad -  \calP_{T_i}\left(\calI - \calG\calG^\ast\right)(\mE_{i,t-1}),\\
	%&=\mE_{k,t-1} - \frac{n}{m}  \calP_{T_k} \calG\calA^\ast_{k,t} (\blambda^{t-1})-  \calP_{T_k}\left(\calI - \calG\calG^\ast\right)(\mE_{k,t-1}),\\
	%&=\calP_{T_i}\calG\calG^\ast\calP_{T_i}(\mE_{i,t-1}) -\frac{n}{m}  \calP_{T_i} \calG\calA^\ast_{i,t} (\blambda^{t-1})\\
 %\left( \sum_{i=1}^{K} \calA_{i,t}\calG^\ast\calP_{T_i}(\mE_{i,t-1})\right)\\
	%&=\calP_{T_k}\calG\calG^\ast\calP_{T_k}(\mE_{k,t-1})-\frac{n}{m}  \calP_{T_k} \calG\calA^\ast_{k,t}  \calA_{k,t}\calG^\ast\calP_{T_k}(\mE_{k,t-1})-\frac{n}{m}   \sum_{i\neq k}\calP_{T_k} \calG\calA^\ast_{k,t}  \calA_{i,t}\calG^\ast\calP_{T_i}(\mE_{i,t-1}) \\
	&=\left( \calP_{T_k}\calG\calG^\ast\calP_{T_k} -\frac{n}{m}  \calP_{T_k} \calG\calA^\ast_{k,t}  \calA_{k,t}\calG^\ast\calP_{T_k}\right)(\mE_{k,t-1})\\
 &\quad -\frac{n}{m}   \sum_{j\neq k}\calP_{T_k} \calG\calA^\ast_{k,t}  \calA_{j,t}\calG^\ast\calP_{T_j}(\mE_{j,t-1}).
 \end{align*}
%where the last line follows from the construction of $\blambda^{t-1}$.
Consequently, one can obtain
\begin{align*}
   & \fronorm{\mU_k\mV_k^\ast-\calP_{T_k}(\mY_k)} =\fronorm{\mE_{k,t_0}}\leq \max_{k\in [K]}\fronorm{\mE_{k,t_0}}\\
    &\leq \max_{k\in [K]} \fronorm{ \calP_{T_k}\lb\calG\calG^\ast -\frac{n}{m}   \calG\calA^\ast_{k,t_0}  \calA_{k,t_0}\calG^\ast\rb\calP_{T_k}(\mE_{k,t_0-1}) }\\
    &\quad + \fronorm{\frac{n}{m}   \sum_{j\neq k}\calP_{T_k} \calG\calA^\ast_{k,t_0}  \calA_{j,t_0}\calG^\ast\calP_{T_j}(\mE_{j,t_0-1}) }\\
    &\stackrel{(a)}{\leq } \max_{k\in [K]}\frac{1}{4}\fronorm{\mE_{k,t_0-1}}+\frac{1}{8K}\sum_{j\neq k}\fronorm{\mE_{j,t_0-1}}\\
    &\leq \max_{k\in [K]} \frac{1}{2}\fronorm{\mE_{k,t_0-1}} \leq \frac{1}{2^{t_0}}\max_{k\in [K]} \fronorm{\mE_{k,0}}\\
    & = \frac{1}{2^{t_0}}\max_{k\in [K]} \fronorm{\mU_k\mV_k^\tranH} \leq  \frac{r}{2^{t_0}}\leq \frac{1}{48Ks\mu_0},
\end{align*}
where inequality $(a)$ is due to Lemma 3.9 in \cite{chen2022vectorized}, Lemma~\ref{lem: wang2}, and  the last inequality holds when $t_0 = \lceil \log_2(48Krs\mu_0)\rceil$.

\subsubsection{Validating~\eqref{condition O norm}} a simple calculation yields that
\begin{align*}
    \mY_k = \sum_{t =1}^{t_0}\lsb \frac{n}{m}\calG\calA^{\ast}_{k,t}\sum_{j = 1}^K\calA_{j,t}\calG^\ast(\mE_{j,t-1})+(\calI-\calG\calG^\ast)\mE_{k,t-1}\rsb
\end{align*}
Applying triangular inequality gives 
\begin{align*}
    \opnorm{\calP_{T_k}^\perp (\mY_k) } &\leq \sum_{t =1}^{t_0} \opnorm{\lb \frac{n}{m}\calG\calA^{\ast}_{k,t}\calA_{k,t}\calG^\ast-\calG\calG^\ast\rb(\mE_{k,t-1})}\\
    &\quad +\sum_{t =1}^{t_0}\sum_{j\neq k}^K\opnorm{\frac{n}{m}\calG\calA^{\ast}_{k,t}\calA_{j,t}\calG^\ast(\mE_{j,t-1})}.
\end{align*}
For the sake of simplicity, we define the term $\alpha_{k,t}$  as
\begin{align*}
    \alpha_{k,t} := \sqrt{s\mu_0 \log (sn)} \left\| \mE_{k,t}\right\|_{\calG, \mathsf{F}} + s\mu_0\log(sn) \left\|\mE_{k,t}\right\|_{\calG,\infty}.
\end{align*}
For any $1\leq t\leq t_0$, Lemma 3.11 in \cite{chen2022vectorized} and inequality \eqref{eq: incoherece A z}  implies that
\begin{align*}
    \opnorm{\lb \frac{n}{m}\calG\calA^{\ast}_{k,t}\calA_{k,t}\calG^\ast-\calG\calG^\ast\rb(\mE_{k,t-1})} &\lesssim   \frac{n}{m}\alpha_{k,t-1},\\
    \opnorm{\frac{n}{m}\calG\calA^{\ast}_{k,t}\calA_{j,t}\calG^\ast(\mE_{j,t-1})} &\lesssim  \frac{n}{m}\alpha_{j,t-1}.
\end{align*}
Thus,  one has
\begin{align}\label{eq: t-1 bound}
    \opnorm{\calP_{T_k}^\perp (\mY_k) } &\lesssim \frac{n}{m}\sum_{t =1}^{t_0}\sum_{j= 1}^K  \alpha_{j,t-1}.
\end{align}
Recall that
\begin{align*}
    \mE_{j,t-1} &=\left( \calP_{T_j}\calG\calG^\ast\calP_{T_j} -\frac{n}{m}  \calP_{T_j} \calG\calA^\ast_{j,t}  \calA_{j,t}\calG^\ast\calP_{T_j}\right)(\mE_{j,t-2})\\
 &\quad -\frac{n}{m}   \sum_{k\neq j}\calP_{T_j} \calG\calA^\ast_{j,t}  \calA_{k,t}\calG^\ast\calP_{T_k}(\mE_{k,t-2}).
\end{align*}
Applying Lemma 3.12 in \cite{chen2022vectorized} and inequality \ref{eq: lemmawang1} yields that
\begin{align}\label{eq: t-2 bound gf}
    \left\| \mE_{j,t-1}\right\|_{\calG, \mathsf{F}} \lesssim  \sqrt{\frac{\mu_1r\log(sn)}{n}}\frac{n}{m}\sum_{k = 1}^K \alpha_{k,t-2}.
\end{align}
Utilizing Lemma 3.13 in \cite{chen2022vectorized} and inequality~\eqref{eq: lemmawang2}, the term $\left\| \mE_{j,t-1}\right\|_{\calG, \infty}$ is bounded by
\begin{align}\label{eq: t-2 bound gi}
    \left\|\mE_{j,t-1}\right\|_{\calG,\infty} \lesssim  \frac{\mu_1r}{n}\frac{n}{m}\sum_{k = 1}^K \alpha_{k,t-2}.
\end{align}
After substituting \eqref{eq: t-2 bound gf} and \eqref{eq: t-2 bound gi} into \eqref{eq: t-1 bound}, we have
\begin{align*}
    &\sum_{j = 1}^K\alpha_{j,t-1} \\%&=  \sum_{j = 1}^K  \sqrt{s\mu_0 \log (sn)} \left\| \mE_{i,t-1}\right\|_{\calG, \mathsf{F}} \\
    %&\quad + \sum_{j = 1}^K  s\mu_0\log(sn) \left\|\mE_{i,t-1}\right\|_{\calG,\infty} \\
    &\lesssim \frac{nK\log(sn)}{m}\lb \sqrt{\frac{s\mu_0\mu_1r}{n}}+\frac{s\mu_0\mu_1r}{n}\rb \sum_{j = 1}^K\alpha_{j,t-2} \\
    &\stackrel{(a)}{\leq }  \frac{1}{2}\sum_{j = 1}^K\alpha_{j,t-2} \leq \lb\frac{1}{2}\rb^{t-1} \sum_{j = 1}^K\alpha_{j,0}.
\end{align*}
where  inequality $(a)$ is due to $m\gtrsim K\sqrt{n}\sqrt{s\mu_0\mu_1r}\log(sn)$.

Finally, noting that $\mE_{j,0} = \mU_j\mV_j^{\tranH}$, applying Lemma 3.14 in \cite{chen2022vectorized} yields that
\begin{align*}
     \opnorm{\calP_{T_k}^\perp (\mY_k) } &\lesssim \frac{n}{m}\sum_{t =1}^{t_0}\sum_{j= 1}^K  \alpha_{j,t-1}\\
     &\lesssim \frac{n}{m}\sum_{t =1}^{t_0}\lb\frac{1}{2}\rb^{t-1} \sum_{j = 1}^K\alpha_{j,0}\\
     &\lesssim  \frac{nK\log(sn)}{2m}\lb \sqrt{\frac{s\mu_0\mu_1r\log(sn)}{n}}+ \frac{s\mu_0\mu_1 r}{n}\rb \\
     &\leq \frac{1}{2},
\end{align*}
when $m\gtrsim \sqrt{n}K\sqrt{s\mu_0\mu_1r}\log(sn)$.

\section{Conclusion}
This paper studies the problem of simultaneous blind demixing and super-resolution. We introduce a convex approach named MVHL and rigorously establish its recovery performance. Our analysis demonstrates that MVHL can achieve exact recovery of the target matrices provided the sample complexity satisfying $n\gtrsim Ksr\log(sn)$. Furthermore, we illustrate the efficacy of MVHL through a series of numerical simulations.

\bibliographystyle{IEEEbib}
%\bibliography{strings,refs}

%\bibliographystyle{IEEEtranN}
%\bibliography{refTensor}
\bibliography{refBlind}

\end{document}